\documentclass[letterpaper, 10 pt, conference]{ieeeconf}  %

\IEEEoverridecommandlockouts                              %
\usepackage{mystyle}

\newcommand{\setA}{\mathbf{A}}
\newcommand{\tf}{t_{\rm f}}
\newcommand{\setLambda}{\mathbf{\Lambda}}
\newcommand{\setSigmaD}{\mathbf{\Sigma}_{\mathcal{D}}}
\newcommand{\setSigmapk}{\mathbf{\Sigma}_{\rm pk}}
\newcommand{\setSigma}{\mathbf{\Sigma}}
\DeclareMathOperator{\affin}{aff}
\DeclareMathOperator{\image}{Im}
\DeclareMathOperator{\kernel}{Ker}

\title{\LARGE \bf
  Informativity of Data-Knowledge Pairs for Lyapunov Equations
}

\author{Ikumi Banno%
\thanks{This work was supported by
  the JST Moonshot R\&D Grant Number JPMJMS2021 and
  JSPS KAKENHI Grant Number JP25K17627}%
\thanks{I. Banno is with the Graduate School of Informatics, 
  Kyoto University, Kyoto 606-8501, Japan
        {\tt\small banno@i.kyoto-u.ac.jp}}%
}

\begin{document}
\maketitle
\thispagestyle{empty}
\pagestyle{empty}
\begin{abstract}
  In the past few years, 
  data informativity with prior knowledge has attracted increasing attention. 
  This line of research aims to characterize
  whether data and prior knowledge suffice for system analysis or design.
  In this paper, we investigate such a characterization
  for the data-driven problem of determining a unique solution to Lyapunov equations.
  First, we introduce a notion of \textit{joint informativity}
  for data-knowledge pairs as an extension of the standard informativity concept.
  Second, we derive
  an algebraic necessary and sufficient condition for the joint informativity.
  Finally, we provide further insights into the joint informativity 
  by considering a special case of prior knowledge.
  The characterization presented in this paper
  is developed for a wide class of prior knowledge, 
  enabling the incorporation of various forms of system information.
\end{abstract}

\section{INTRODUCTION}
\label{sec:introduction}

Data-driven approaches
\cite{bib:hou2013,bib:ljung1987,bib:vanwaarde2025}
have emerged as a promising paradigm
for the analysis and control of real-world dynamical systems.
In many practical situations, the system dynamics
involve unknown parameters or may be entirely unknown, 
which makes it difficult to directly apply standard model-based techniques.
To overcome this limitation, 
data-driven approaches utilize measurement data
for system analysis and control
without requiring a known model a priori.
These frameworks can be broadly classified into two categories: 
\textit{indirect methods} and \textit{direct methods}
\cite{bib:vanwaarde2025}. 
The former are based on system identification followed by
standard model-based methods,
whereas the latter analyze or design dynamical systems
directly from data, bypassing explicit system identification.

In the analysis and design of physical systems, 
incorporating prior knowledge on the system dynamics
into a data-driven framework plays an important role
\cite{bib:ljung1987}. 
One advantage is that 
it can help reduce the amount of data required for data-driven tasks.
Another advantage is that
the incorporation of prior knowledge can improve 
the reliability and explainability of the obtained results. 
For these reasons, data-driven methods that incorporate 
prior knowledge have a long history of study.
In particular, such approaches have been well developed
in the context of indirect methods, 
where \textit{gray-box identification} 
\cite{bib:ljung1987} has been established. 
In contrast, direct data-driven methods have mainly
been developed for \textit{black-box} settings
\cite{bib:hjalmarsson1998IFT,bib:persis2020}, 
where prior knowledge is not incorporated. 
However, some works have considered 
direct methods with prior knowledge, 
e.g., 
network reconstruction \cite{bib:vanwaarde2019},
robust control \cite{bib:berberich2023},
and model predictive control \cite{bib:watson2025hybrid}.

Recently, \textit{data informativity} has been proposed
as a novel framework for direct data-driven methods
\cite{bib:vanwaarde2020,bib:vanwaarde2023,bib:vanwaarde2025}. 
This concept characterizes
whether a given dataset contains sufficient information
to solve a data-driven problem of interest. 
Within this paradigm, conditions under which 
data are informative for various analysis and control tasks
have been investigated
\cite{bib:vanwaarde2020,bib:vanwaarde2020suboptimal,bib:vanwaarde2023,
  bib:trentelman2022,
  bib:bannoLCSS2023,bib:bannoCDC2025}.
In particular, the author has previously studied Lyapunov equations,
a fundamental framework for system analysis and control, in this context
\cite{bib:bannoLCSS2023,bib:bannoCDC2025}.
Solving these equations is fundamental to stability analysis
and the computation of controllability and observability Gramians\cite{bib:bernstein2018}.
The results on data informativity discussed above
characterize when measurement data alone are sufficient
to solve a given system-theoretic problem.
However, in many situations, additional information on the underlying system
is available beyond measurement data.
In such cases, lack of informativity of a dataset does not necessarily imply
that the problem is unsolvable.
Indeed, prior knowledge may complement the available data
and make the problem solvable even when the dataset itself is not informative.
This motivates the need to characterize data-knowledge pairs, 
rather than datasets alone, that enable system analysis and design.

From this viewpoint,
recent studies have considered
data informativity with prior knowledge
\cite{bib:iwata2025,bib:shakouri2025,bib:huang2025,bib:shakouri2026}.
In particular, \cite{bib:iwata2025,bib:shakouri2025}
consider prior knowledge on system-theoretic properties:
\cite{bib:iwata2025} characterizes datasets enabling stability analysis
under the assumption of system positivity,
while \cite{bib:shakouri2025} studies data-driven stabilization
under prior knowledge such as stabilizability or controllability.
In contrast, \cite{bib:huang2025} considers a different type
of prior knowledge, namely that some entries of the system
matrices are exactly known, and applies it to
data-driven stabilization of polynomial systems.
More recently, \cite{bib:shakouri2026} investigates
experiment design for set-membership identification
under general sets of prior knowledge.
However, to the best of our knowledge,
a characterization of data-knowledge pairs
for determining the unique solution to Lyapunov equations
has not yet been established.

Therefore, this paper addresses this problem
for general classes of prior knowledge.
First, we introduce an informativity notion for data-knowledge pairs for this problem.
This notion, referred to as \textit{joint informativity} in this paper,
extends the standard data informativity concept in \cite{bib:vanwaarde2020}. 
Next, we derive an algebraic necessary and sufficient condition
for joint informativity.
The key observation underlying this derivation is that
the existence of a common Lyapunov solution is preserved
when the set of state matrices consistent with the data
and prior knowledge is replaced by its affine hull.
Using a finite affine representation of this hull,
we characterize the joint informativity in terms of the existence
of a common solution to finitely many Lyapunov equations.
Based on this characterization,
we further develop a data-driven method for solving Lyapunov equations. 
Finally, we provide additional insights into joint informativity
by specializing the analysis to a structured class of prior knowledge. 
The proposed framework is illustrated through numerical examples.

This paper is organized as follows.
\secref{sec:prob} introduces the notion of joint informativity and then
formulates data-driven problems for Lyapunov equations.
Solutions to the problems are presented in \secref{sec:solution}.
\secref{sec:conclusion} concludes the paper.

\textbf{Notation}:
\textit{(i) Standard sets}:
$\setR$, $\setRp$, $\setZp$, and $\setZzp$ denote
the sets of real numbers, positive real numbers, 
positive integers, and nonnegative integers, respectively.

\noindent
\textit{(ii) Matrices and vectors}:
For a matrix $M \in \setR^{n \times m}$,
$M^+ \in \setR^{m \times n}$, $\image(M) \subseteq \setR^n$, and $\kernel(M) \subseteq \setR^m$
are the Moore-Penrose generalized inverse, the column space, and the null space of $M$, respectively.
For a square matrix $A \in \setR^{n \times n}$,
we denote the set of all eigenvalues of $A$ by $\setLambda(A)$.
The $n \times n$ identity matrix and
the $n \times m$ zero matrix are denoted by $I_n$ and $0_{n \times m}$, respectively.
When the dimensions of a zero matrix are clear
from the context, we simply denote it by $0$.

\noindent
\textit{(iii) Vector and affine subspaces}:
Let $\mathbf{V}$ be a real vector space. %
For an element $a \in \mathbf{V}$ and
a vector subspace $\mathbf{W} \subseteq \mathbf{V}$,
we write
$a + \mathbf{W} \coloneqq \{a + w \mid w \in \mathbf{W}\}$
for the corresponding affine subspace.
For a set $\mathbf{S} \subseteq \mathbf{V}$,
$\vspan(\mathbf{S})$ and $\affin(\mathbf{S})$
denote the linear span and the affine hull of
$\mathbf{S}$, respectively.

\noindent
\textit{(iv) Empty matrices}:
Consider matrices
$X \in \setR^{n \times l}$,
$Y \in \setR^{l \times l}$, and
$Z \in \setR^{l \times m}$.
When $l=0$, we adopt the conventions
$XZ=XYZ=0_{n \times m}$ and $\image(X)=\{0_{n \times 1}\}$.
The generalized inverse of an empty matrix
$M \in \setR^{p \times q}$ is the unique empty matrix
in $\setR^{q \times p}$.
Any matrix with zero columns is
regarded as having full column rank.
For example, when $l=0$,
we have $X^+ \in \setR^{0 \times n}$ and 
$XX^+ = 0_{n \times n}$,
and $X$ has full column rank.

\section{PROBLEM FORMULATION}
\label{sec:prob}

\subsection{General Concept of Joint Informativity}
  This subsection introduces a general framework for \textit{joint informativity}
  for pairs of data and prior knowledge of dynamical systems.
  
  Let $\mathcal{M}$ be a model class of dynamical systems
  and $\mathcal{S} \in \mathcal{M}$ the system of interest for data-driven analysis.
  Here, we assume that $\mathcal{S}$ is unknown. 
  Instead, we are given two sources of information: 
  (i) a dataset $\mathcal{D}$ obtained from measurements of the system behavior,  and
  (ii) prior knowledge represented by a set $\setSigmapk \subseteq \mathcal{M}$,
  which contains the true system $\mathcal{S}$ and
  captures a priori information about $\mathcal{S}$ derived from, 
  e.g., physical laws, structural properties, or parameter bounds.

  We are interested in determining 
  whether the available data and prior knowledge  guarantee that the true system $\mathcal{S}$ satisfies a given property $\mathcal{P}$.
  To this end, let $\setSigma_{\mathcal{P}} \subseteq \mathcal{M}$ be the set of systems satisfying the property $\mathcal{P}$, 
  and let $\setSigmaD \subseteq \mathcal{M}$ be the set of systems consistent with the dataset $\mathcal{D}$.
  Then, our situation is illustrated by \figref{fig:JI}.  
  Throughout the paper,
  we do not require the available information to uniquely identify the system; 
  that is, we do not assume that $\setSigmaD \cap \setSigmapk = \{\mathcal S\}$.

  \begin{figure}[t]
    \centering
    \vspace{2.5mm}
    \definecolor{SageGreen}{HTML}{B4C9B8}
    \definecolor{LightGray}{HTML}{EBEBEB}
    \begin{tikzpicture}[x=1pt,y=1pt]
      
      \filldraw[fill=white,thick] (0,0) rectangle (230,120);
      \node[below left=3pt] at (230, 120) {$\mathcal{M}$};
      
      \filldraw[
        fill=LightGray,
        rounded corners=4pt,
        thick
        ] (85,10) rectangle (145,100);
      \node[below left=2pt and 1pt] at (145,100) {$\mathbf{\Sigma}_{\mathcal{P}}$};
      
      \begin{scope}
        \clip (80,50) ellipse [x radius=55, y radius=30];
        \fill[SageGreen]
        (150,50) ellipse [x radius=55, y radius=30];
      \end{scope}
      
      \draw[thick]
      (80,50) ellipse [x radius=55, y radius=30];
      \node at (65,60) {$\setSigmaD$};
      
      \draw[thick]
      (150,50) ellipse [x radius=55, y radius=30];
      \node at (165,60) {$\setSigmapk$};
      
      \fill (110,53) circle (2pt);
      \node[right=2pt] at (110,53) {$\mathcal S$};   
   
    \end{tikzpicture}
    \caption{Illustration of joint informativity. 
      In this case, the data-knowledge pair
      $(\mathcal{D}, \setSigmapk)$ is jointly informative for the property $\mathcal{P}$.}
    \label{fig:JI}
  \end{figure}
  
  We can deduce that the true system $\mathcal{S}$ has the property $\mathcal{P}$
  from the pair $(\mathcal{D}, \setSigmapk)$ if and only if
  all the systems in $\setSigmaD \cap \setSigmapk$ have the property $\mathcal{P}$.
  This naturally motivates us to introduce the following concept.

  \begin{definition}
    \label{def:JI}
    We say that
    \textit{%
      $(\mathcal{D}, \setSigmapk)$ is jointly informative
      for the system property $\mathcal{P}$%
      }
      if $\setSigmaD \cap \setSigmapk \subseteq \setSigma_{\mathcal{P}}$ holds.    
    \thmend
  \end{definition}
  
  Several remarks on \defref{def:JI} are given, 
  with reference to the notion of data informativity proposed in \cite{bib:vanwaarde2020}. 
  
  First, \defref{def:JI} generalizes the notion of data informativity. 
  Indeed, when $\setSigmapk = \mathcal{M}$, 
  i.e., when no prior knowledge is available, 
  joint informativity coincides with data informativity. 
  
  Second, joint informativity clarifies the benefit of
  incorporating prior knowledge into data-driven analysis. 
  Consider the situation illustrated in \figref{fig:JI}. 
  In this case, the dataset $\mathcal{D}$ is not informative for the property $\mathcal{P}$, 
  i.e., $\setSigmaD \subseteq \setSigma_{\mathcal{P}}$ does not hold. 
  Nevertheless, the pair $(\mathcal{D},\setSigmapk)$ is jointly informative for $\mathcal P$. 
  This example demonstrates that 
  system analysis may still be possible even when existing frameworks based solely on data are not applicable. 
  This advantage becomes particularly evident in the extreme case $\setSigmaD = \mathcal{M}$.

  Third, joint informativity can also be employed for controller design
  to guarantee that the closed-loop system satisfies a given property $\mathcal{P}$. 
  Such an extension can be developed along the same lines as in \cite{bib:vanwaarde2020}. 
  We do not pursue this direction further in the present paper.

  Finally, the term joint informativity emphasizes
  that informativity depends on both the data
  and the prior knowledge.
  Related formulations have been studied for stabilization
  \cite{bib:iwata2025,bib:shakouri2025,bib:huang2025}. 
  In this paper, we derive an algebraic necessary
  and sufficient condition for joint informativity
  for Lyapunov equations under general sets of prior knowledge.
  The definition admits a natural interpretation 
  even when $\mathcal{D}$ itself contains no information (i.e., $\setSigmaD = \mathcal{M}$), 
  since informativity is then determined entirely by the prior knowledge $\setSigmapk$.

\subsection{Joint Informativity for Lyapunov Equations}
  This paper focuses on the joint informativity for solving Lyapunov equations.
  The problems addressed in the paper are formulated as follows.
  
  Consider the linear system
  \begin{equation}
    \label{eqn:sys}
    \dot{x}(t) = Ax(t),
  \end{equation}
  where $x(t) \in \setR^{n}$ is the state of the system
  and $A \in \setR^{n \times n}$ is a constant matrix.  
  For this system,
  a Lyapunov equation is given as follows:
  \begin{equation}
    \label{eqn:Lyap}
    A P + P A^\top = -Q,
  \end{equation}
  where 
  $Q \in \setR^{n \times n}$ is a constant matrix and
  $P \in \setR^{n \times n}$ is the unknown of the equation.
  The Lyapunov equation \eqref{eqn:Lyap} has a unique solution $P$
  if and only if
  \begin{equation}
    \label{eqn:Lyap_eigenvalue_cond}
    \forall \lambda, \lambda^\prime \in \setLambda(A) \ \ 
    \lambda + \lambda^\prime \neq 0
  \end{equation}
  holds \cite{bib:bernstein2018}.  
  
  Throughout the paper, 
  we assume $A \in \setA_n$, where
  \begin{equation}
    \label{eqn:setAn}
    \setA_n \coloneqq \{ A \in \setR^{n \times n} \mid \text{\eqref{eqn:Lyap_eigenvalue_cond} holds}\},
  \end{equation}
  to guarantee the existence of
  the unique solution to the Lyapunov equation \eqref{eqn:Lyap}.
  In the following,
  $\Phi(A, Q) \in \setR^{n \times n}$
  denotes the unique solution $P$ to \eqref{eqn:Lyap}
  for the pair $(A, Q) \in \setA_n \times \setR^{n \times n}$.
  
  Based on the above preparations,
  let us introduce our informativity concept for
  solving Lyapunov equations in \eqref{eqn:Lyap}.
  We consider the case $\mathcal{M} = \setA_n$.
  For a given $x_0 \in \setR^n$,
  let $x(t, x_0) \in \setR^n$ be the solution to \eqref{eqn:sys}
  under the initial condition $x(0) = x_0$.
  Then, the dataset $\mathcal{D}$ is defined as
  a fragment of the state trajectory $x(t, x_0)$, i.e.,
  \begin{equation}
    \label{eqn:def_dataset}
    \mathcal{D} \coloneqq \bigcup_{t \in [0, \tf)} \{(t, x(t, x_0))\},
  \end{equation}
  where $\tf \in \setRp \cup \{\infty \}$ is the duration of the observation interval.
  For the dataset $\mathcal{D}$,
  the set of data-consistent systems is given by
  \begin{equation}
    \label{eqn:def_setSigmaD}
    \setSigmaD \coloneqq \{ \tilde{A} \in \setA_n \mid
    \forall (\tau, \xi) \in \mathcal{D} \ \ \xi = e^{\tilde{A} \tau}x_0 \}.
  \end{equation}
  Meanwhile, 
  $\setSigmapk \subseteq \setA_n$
  is given as a candidate set of state matrices of the system \eqref{eqn:sys},
  satisfying
  \begin{equation}
    \label{eqn:A_in_setSigmapk}
    A \in \setSigmapk.
  \end{equation}
  
  Then, the joint informativity addressed in this paper is formalized as follows.
  
  \begin{definition}
    \label{def:JI_Lyap}
    Consider the system \eqref{eqn:sys}, 
    where we assume $A \in \setA_n$.
    For this system, suppose that
    a dataset $\mathcal{D}$ as in \eqref{eqn:def_dataset}, 
    prior knowledge $\setSigmapk \subseteq \setA_n$ satisfying \eqref{eqn:A_in_setSigmapk},
    and a matrix $Q \in \setR^{n \times n}$ are given.
    Then, we say that
    $(\mathcal{D}, \setSigmapk)$ is 
    \textit{%
      jointly informative
      for the Lyapunov equation (\ref{eqn:Lyap})%
      }
    if
    \begin{equation}
      \label{eqn:Phi=Phi}
      \Phi(\tilde{A}_1, Q) = \Phi(\tilde{A}_2, Q)
    \end{equation}
    holds for all 
    $\tilde{A}_1, \tilde{A}_2 \in \setSigmaD \cap \setSigmapk$.
     \thmend
  \end{definition}
  
  In \defref{def:JI_Lyap},
  $A \in \setA_n$ is assumed 
  to guarantee that the Lyapunov equation \eqref{eqn:Lyap} has a unique solution.
  This restriction can be regarded as additional prior knowledge 
  for the system \eqref{eqn:sys}.
  Note that \defref{def:JI_Lyap} corresponds to \defref{def:JI} for the case
  $\setSigma_{\mathcal{P}} = \{ \tilde{A} \in \setA_n \mid \Phi(\tilde{A}, Q) = \Phi(A, Q) \}$.

  \begin{example}
    \label{ex:motivating_example}
    Consider the system \eqref{eqn:sys}
    and the Lyapunov equation \eqref{eqn:Lyap}
    with
    \begin{equation}
      \label{eqn:ex/A_and_Q}
      A = \begin{bmatrix}
        -1 & 1 \\
        0 & -2
        \end{bmatrix}, \ \ 
      Q = \begin{bmatrix}
        2 & 3 \\
        -3 & 0
      \end{bmatrix},
    \end{equation}
    where $A \in \setA_2$.
    In this case, the unique solution to the Lyapunov equation is given by
    \begin{equation}
      \label{eqn:ex/PhiAQ}
      \Phi(A,Q) = 
      \begin{bmatrix}
        1 & 1 \\
        -1 & 0
      \end{bmatrix}.
    \end{equation}
    Suppose that the dataset $\mathcal{D}$ is given by
    \eqref{eqn:def_dataset} for $\tf = 2$ and $x_0 = [1 \ \ 0]^\top$
    (i.e., $x(t) = [e^{-t} \ \ 0]^\top$)
    and the prior knowledge $\setSigmapk$ is given by
    \begin{equation}
      \label{eqn:ex/setSigmapk}
      \setSigmapk = \{\tilde{A} \in \setA_2 \mid
        0 < \tilde{a}_{12} < 2, \ \ 
        \tilde{a}_{22} = -2 \},
    \end{equation}
    where $\tilde{a}_{ij}$ is the $(i,j)$-th element of $\tilde{A}$.
    Then, let us show that
    1) the matrix $A$ cannot be uniquely identified
    from the pair $(\mathcal{D}, \setSigmapk)$
    but 2) $(\mathcal{D}, \setSigmapk)$ is jointly informative
    for the Lyapunov equation \eqref{eqn:Lyap}.
    
    First, we verify 1).
    From \eqref{eqn:def_setSigmaD}, $\setSigmaD$ can be expressed as
    $\setSigmaD = \{ \tilde{A} \in \setA_2 \mid
    \tilde{a}_{11} = -1, \ \tilde{a}_{21} = 0 \}$ 
    and thus we obtain
    \begin{equation}
      \label{eqn:ex/setSigmaDpk}
      \setSigmaD \cap \setSigmapk =
        \left\{
        \begin{bmatrix}
          -1 & \alpha \\
          0 & -2
        \end{bmatrix}
        \relmiddle{|} \alpha \in (0,2)
        \right\}.          
    \end{equation}
    Since  $|\setSigmaD \cap \setSigmapk | > 1$,
    the available information $(\mathcal{D}, \setSigmapk)$ does not uniquely identify the system.

    Next, 2) is shown as follows.
    By solving the Lyapunov equation
    \begin{align}
      \begin{bmatrix}
        -1 \!&\! \alpha \\
         0 \!&\! -2
      \end{bmatrix}\!\!\!
      \begin{bmatrix}
        p_{11} \!\!&\!\! p_{12} \\
        p_{21} \!\!&\!\! p_{22}
      \end{bmatrix}
      \!+\!      
      \begin{bmatrix}
        p_{11} \!\!&\!\! p_{12} \\
        p_{21} \!\!&\!\! p_{22}
      \end{bmatrix}\!\!\!
      \begin{bmatrix}
        -1 \!&\! \alpha \\
        0  \!&\! -2
      \end{bmatrix}^\top
      \!\!=\! -\!
      \begin{bmatrix}
        2 & 3 \\
        -3 & 0
      \end{bmatrix}
    \end{align}
    for each $\alpha \in (0,2)$, we obtain the solution
    \begin{equation}
      P = 
      \begin{bmatrix}
        p_{11} & p_{12} \\
        p_{21} & p_{22}
      \end{bmatrix}
      =
      \begin{bmatrix}
        1 & 1 \\
        -1 & 0
      \end{bmatrix}.
    \end{equation}
    Since $P$ does not depend on the parameter $\alpha \in (0,2)$,
    we can conclude that \eqref{eqn:Phi=Phi}
    holds for all 
    $\tilde{A}_1, \tilde{A}_2 \in \setSigmaD \cap \setSigmapk$.
    Thus,
    $(\mathcal{D}, \setSigmapk)$ is jointly informative for
    the Lyapunov equation \eqref{eqn:Lyap}. \thmend
  \end{example}
  
  In this example,
  both the dataset and the prior knowledge
  are required to uniquely determine $\Phi(A, Q)$
  when $A$ is unknown.
  The following example demonstrates this fact.
  
  \begin{example}
    \label{ex:only_data_or_pk}
    Consider $A$, $Q$,
    the dataset $\mathcal{D}$, and
    the prior knowledge $\setSigmapk$ in \exref{ex:motivating_example}.
    Then, let us verify that
    $\Phi(A,Q)$ cannot be uniquely determined from either
    1) the dataset $\mathcal{D}$ alone or 
    2) the prior knowledge $\setSigmapk$ alone.
    
    First, we verify 1) by showing that
    \eqref{eqn:Phi=Phi} is violated for some
    $\tilde{A}_1,\tilde{A}_2 \in \setSigmaD$.
    Let 
    \begin{equation}
      \tilde{A}_1 =
      \begin{bmatrix}
        -1 & 1 \\
         0 & -1
      \end{bmatrix}
    \end{equation}
    and $\tilde{A}_2 = A$.
    Then, $\tilde{A}_1, \tilde{A}_2 \in \setSigmaD$ holds.
    Since
    \begin{align}
      \Phi(\tilde{A}_1, Q) &=
        \begin{bmatrix}
          1 & 1.5 \\
          -1.5 & 0
        \end{bmatrix}, \\
        \Phi(\tilde{A}_2, Q) &= 
        \begin{bmatrix}
          1 & 1 \\
          -1 & 0
        \end{bmatrix},
        \label{eqn:ex/only_data/tildeA2}
    \end{align}
    we have
    \begin{equation}
      \label{eqn:Phi_neq_Phi}
      \Phi(\tilde{A}_1, Q) \neq \Phi(\tilde{A}_2, Q).
    \end{equation}
    This indicates that
    the dataset $\mathcal{D}$ alone does not contain sufficient information
    to uniquely determine the value of $\Phi(A,Q)$.
    
    Similarly, to verify 2),
    we show that
    \eqref{eqn:Phi=Phi} is violated for some
    $\tilde{A}_1,\tilde{A}_2 \in \setSigmapk$.
    Let
    \begin{equation}
      \tilde{A}_1 = 
      \begin{bmatrix}
        -2 & 0.5 \\
        4 & -2
      \end{bmatrix}
    \end{equation}
    and $\tilde{A}_2 = A$.
    Then, $\tilde{A}_1, \tilde{A}_2 \in \setSigmapk$, and 
    we have
    \begin{equation}
      \Phi(\tilde{A}_1, Q) =
      \begin{bmatrix}
        0.75 & 1.75 \\
        0.25 & 2
      \end{bmatrix}
    \end{equation}
    and \eqref{eqn:ex/only_data/tildeA2}.
    Hence, we again have \eqref{eqn:Phi_neq_Phi}.
    Thus, the prior knowledge $\setSigmapk$ alone
    is also insufficient to uniquely determine $\Phi(A,Q)$.
    
    These observations demonstrate that
    there exist problems for which computing $\Phi(A,Q)$
    from the given dataset alone is fundamentally impossible,
    regardless of the data-driven method employed.
    For such problems, with the dataset fixed,
    incorporating prior knowledge into the data-driven analysis
    is essential for uniquely determining $\Phi(A,Q)$.
    \thmend
  \end{example}
  
  Using this notion of joint informativity, 
  we formulate the following two problems.
  
  \begin{problem}
    \label{prob:informativity}
    Consider the system \eqref{eqn:sys} under $A \in \setA_n$.
    Suppose that a dataset $\mathcal{D}$ as in \eqref{eqn:def_dataset},
    prior knowledge $\setSigmapk \subseteq \setA_n$
    satisfying \eqref{eqn:A_in_setSigmapk},
    and a matrix $Q \in \setR^{n \times n}$ are given.
    Then, determine whether $(\mathcal{D}, \setSigmapk)$ is jointly informative for
    the Lyapunov equation \eqref{eqn:Lyap} or not. \thmend
  \end{problem}
  
  \begin{problem}
    \label{prob:computation}
    Consider the same setting as in \probref{prob:informativity}.
    Assume that $A$ is unknown but
     $(\mathcal{D}, \setSigmapk)$ is jointly informative for
    the Lyapunov equation \eqref{eqn:Lyap}.
    Compute $\Phi(A,Q)$. \thmend
  \end{problem}

  In \probref{prob:computation}, the unknown state matrix $A$
  is assumed to belong to $\setA_n$.
  This is precisely the necessary and sufficient condition
  for the Lyapunov equation \eqref{eqn:Lyap}
  to have a unique solution.
  The condition holds for almost all matrices
  $A \in \setR^{n \times n}$, since
  $\setR^{n \times n}\setminus\setA_n$
  has Lebesgue measure zero.
  Within this class, our framework characterizes when
  the unique solution can be determined from
  the available data and prior knowledge,
  without requiring unique identification of $A$.

\section{INFORMATIVITY ANALYSIS}
\label{sec:solution}
  This section provides solutions to
  Problems~\ref{prob:informativity}--\ref{prob:computation}.
  
\subsection{Characterization of Joint Informativity}

  In \probref{prob:informativity},
  no structural assumption 
  is imposed on the prior knowledge $\setSigmapk$
  except for the requirement $A \in \setSigmapk$,
  which makes it difficult to characterize a condition for the joint informativity.
  To overcome this difficulty, 
  the following lemma plays an important role.
  
  \begin{lemma}
    \label{lem:equivalence}
    Let a nonempty set $\setSigma \subseteq \setA_n$ 
    and a matrix $Q \in \setR^{n \times n}$ be given.
    Then, the following two statements are equivalent:
    \begin{enumerate}[(i)]
      \item \eqref{eqn:Phi=Phi} holds for all
        $\tilde{A}_1, \tilde{A}_2 \in \setSigma$.
      \item \eqref{eqn:Phi=Phi} holds for all
        $\tilde{A}_1, \tilde{A}_2 \in \affin(\setSigma) \cap \setA_n$.
        \thmend
    \end{enumerate}
  \end{lemma}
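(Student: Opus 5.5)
The direction (ii)$\Rightarrow$(i) is immediate: $\setSigma \subseteq \affin(\setSigma)$ and $\setSigma \subseteq \setA_n$, so $\setSigma \subseteq \affin(\setSigma)\cap\setA_n$. The work is in (i)$\Rightarrow$(ii). The plan is to use linearity of the Lyapunov equation in $A$ for fixed $P$.

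Assume (i). Fix any $\tilde{A}_0 \in \setSigma$, which exists since $\setSigma$ is nonempty, and set $P_0 \coloneqq \Phi(\tilde{A}_0, Q)$. By (i), $\Phi(\tilde{A}, Q) = P_0$ for every $\tilde{A}\in\setSigma$, that is,
\begin{equation*}
  \tilde{A} P_0 + P_0 \tilde{A}^\top = -Q \quad \text{for all } \tilde{A}\in\setSigma .
\end{equation*}
For fixed $P_0$, the map $L_{P_0}\colon \tilde{A}\mapsto \tilde{A}P_0 + P_0\tilde{A}^\top$ is linear. Hence the set $\{\tilde{A}\in\setR^{n\times n} \mid L_{P_0}(\tilde{A}) = -Q\}$ is an affine subspace of $\setR^{n\times n}$; it is the preimage of a point under a linear map and is nonempty. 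This affine subspace contains $\setSigma$, so it contains $\affin(\setSigma)$, the smallest affine set containing $\setSigma$.

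Concretely, any $\bar{A}\in\affin(\setSigma)$ can be written as $\bar{A}=\sum_{i=1}^k c_i \tilde{A}_i$ with $\tilde{A}_i\in\setSigma$ and $\sum_i c_i = 1$. Then
\begin{equation*}
  \bar{A}P_0 + P_0\bar{A}^\top = \sum_{i=1}^k c_i\bigl(\tilde{A}_iP_0+P_0\tilde{A}_i^\top\bigr) = -\Bigl(\sum_{i=1}^k c_i\Bigr) Q = -Q .
\end{equation*}
Thus $P_0$ solves the Lyapunov equation for $\bar{A}$. If in addition $\bar{A}\in\setA_n$, the solution is unique, so $\Phi(\bar{A},Q)=P_0$. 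It follows that $\Phi(\tilde{A}_1,Q)=P_0=\Phi(\tilde{A}_2,Q)$ for all $\tilde{A}_1,\tilde{A}_2\in\affin(\setSigma)\cap\setA_n$, which is (ii).

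There is no real obstacle here. The one point needing care is that the intersection with $\setA_n$ is exactly what makes $\Phi$ well defined on the affine hull. Membership in $\setA_n$ guarantees uniqueness, so the common solution $P_0$ must coincide with $\Phi(\bar{A},Q)$. Points of $\affin(\setSigma)$ outside $\setA_n$ are simply excluded by the statement.
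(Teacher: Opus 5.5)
Your proof is correct and follows essentially the same route as the paper. Like the paper, you fix a reference element of $\setSigma$ and use (i) to see that its solution $P_0$ solves the Lyapunov equation for every element of $\setSigma$, extend this to affine combinations by linearity in $\tilde{A}$ with coefficients summing to one, and conclude $\Phi(\bar{A},Q)=P_0$ from uniqueness on $\setA_n$; the direction (ii)$\Rightarrow$(i) also matches the paper, via $\setSigma \subseteq \affin(\setSigma)\cap\setA_n$.
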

  
  \begin{proof}    
    The implication (ii) $\Rightarrow$ (i) follows from
    $\setSigma = \setSigma \cap \setA_n \subseteq \affin(\setSigma) \cap \setA_n$.    
    To prove the converse, assume that (i) holds.
    Choose $\bar{A}_0\in\setSigma$ and define
    $\bar{P}_0\coloneqq\Phi(\bar{A}_0,Q)$.
    We show that
    \begin{equation}
      \label{eqn:pf/lem-equiv/sufficient_condition}
      \forall \tilde{A} \in \affin(\setSigma) \cap \setA_n \ \ 
      \Phi(\tilde{A}, Q) = \bar{P}_0.
    \end{equation}
    
    Let $\tilde{A} \in \affin(\setSigma) \cap \setA_n$.
    By the definition of the affine hull,
    there exist $p\in\setZp$,
    matrices $\bar{A}_i\in\setSigma$,
    and scalars $c_i\in\setR$ $(i=1,\ldots,p)$ such that
    \begin{equation}
      \label{eqn:pf/lem-equiv/aff_combination}
      \tilde{A}=\sum_{i=1}^{p}c_i\bar{A}_i,
      \qquad
      \sum_{i=1}^{p}c_i=1.
    \end{equation}
    Since $\bar{A}_0, \bar{A}_i \in \setSigma$,
    statement (i) gives
    $\Phi(\bar{A}_i, Q) = \bar{P}_0$ for all $i = 1,2,\ldots,p$
    and thus we have
    \begin{equation}
      \label{eqn:pf/lem-equiv/Lyap_bAi}
      \bar{A}_i\bar{P}_0+\bar{P}_0\bar{A}_i^\top=-Q
      \quad (i=1,2,\ldots,p).
    \end{equation}
    It follows from \eqref{eqn:pf/lem-equiv/aff_combination} and \eqref{eqn:pf/lem-equiv/Lyap_bAi} that
    \begin{align}
      \tilde{A} \bar{P}_0 + \bar{P}_0 \tilde{A}^\top 
      &= \mleft( \sum_{i =1}^p c_i \bar{A}_i \mright) \bar{P}_0 + \bar{P}_0 \mleft( \sum_{j=1}^p c_j \bar{A}_j^\top \mright)
      \nonumber \\
      &= \sum_{i=1}^p c_i ( \bar{A}_i \bar{P}_0 + \bar{P}_0 \bar{A}_i^\top )
      \nonumber \\
      &= \sum_{i=1}^p c_i ( -Q ) = -Q.
    \end{align}
    Since $\tilde{A}\in\setA_n$, the corresponding
    Lyapunov equation has a unique solution.
    Therefore, $\Phi(\tilde{A},Q)=\bar{P}_0$.
    As $\tilde{A}$ was arbitrary,
    \eqref{eqn:pf/lem-equiv/sufficient_condition} holds.
    This proves (ii) and completes the proof.
  \end{proof}

  \lemref{lem:equivalence} provides us with an insight
  to understand the joint informativity for the Lyapunov equation \eqref{eqn:Lyap}.
  In fact, (i) in \lemref{lem:equivalence} coincides with
  the statement in \defref{def:JI_Lyap} when $\setSigma = \setSigmaD \cap \setSigmapk$.
  This immediately gives the following proposition.
  
  \begin{proposition}
    \label{prop:transformation}
    Consider \probref{prob:informativity}.
    Then, $(\mathcal{D}, \setSigmapk)$ is jointly informative
    for the Lyapunov equation \eqref{eqn:Lyap}
    if and only if
    \eqref{eqn:Phi=Phi} holds for all
    $\tilde{A}_1, \tilde{A}_2 \in \affin(\setSigmaD \cap \setSigmapk) \cap \setA_n$.
    \thmend
  \end{proposition}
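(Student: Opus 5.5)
The plan is to apply \lemref{lem:equivalence} with $\setSigma \coloneqq \setSigmaD \cap \setSigmapk$. By \defref{def:JI_Lyap}, joint informativity of $(\mathcal{D}, \setSigmapk)$ says exactly that \eqref{eqn:Phi=Phi} holds for all $\tilde{A}_1, \tilde{A}_2 \in \setSigmaD \cap \setSigmapk$. This is statement (i) of \lemref{lem:equivalence} for this choice of $\setSigma$. Statement (ii) of the lemma is then precisely the condition in the proposition, with $\affin(\setSigmaD \cap \setSigmapk) \cap \setA_n$ in place of the affine hull. So the equivalence follows at once, provided the lemma's hypotheses hold.

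The only hypotheses to check are $\setSigma \subseteq \setA_n$ and $\setSigma \neq \emptyset$. The inclusion holds because $\setSigmaD \subseteq \setA_n$ by \eqref{eqn:def_setSigmaD} and $\setSigmapk \subseteq \setA_n$ by assumption.

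Nonemptiness is the one step that needs an argument, and it is short. I would show that the true matrix $A$ lies in $\setSigma$. First, $A \in \setSigmapk$ by \eqref{eqn:A_in_setSigmapk}. Second, $A \in \setSigmaD$: we have $A \in \setA_n$, and every $(\tau,\xi) \in \mathcal{D}$ has the form $(t, x(t,x_0))$ with $x(t,x_0) = e^{At}x_0$, since this is the solution of \eqref{eqn:sys}. Hence $A \in \setSigma$, so $\setSigma$ is nonempty.

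Invoking \lemref{lem:equivalence} then completes the proof. I expect no real obstacle here; the only care needed is the nonemptiness check, since the lemma is stated for nonempty $\setSigma$.
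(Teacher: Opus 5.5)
Your proposal is correct and matches the paper's approach: the paper obtains this proposition immediately from Lemma~\ref{lem:equivalence} with $\setSigmaD \cap \setSigmapk$ in the role of the nonempty set. Your explicit check that $A \in \setSigmaD \cap \setSigmapk \subseteq \setA_n$, which secures the lemma's hypotheses, is left implicit in the paper, and you handle it correctly.
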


\subsection{Necessary and Sufficient Condition for Informativity}
  \label{sec:informativity_theorem}
  
  Let us derive a solution to \probref{prob:informativity}. 
  
  \propref{prop:transformation} indicates that
  it is sufficient to consider the affine hull
  $\affin(\setSigmaD \cap \setSigmapk)$
  instead of $\setSigmaD \cap \setSigmapk$.
  This motivates us to introduce the following expression:
  \begin{equation}
    \label{eqn:structure_of_affine_setSigma}
    \affin(\setSigmaD \cap \setSigmapk)
    = A_0 + \vspan(\{A_1, A_2, \ldots, A_d\}),
  \end{equation}
  where $d \in \setZzp$ and $A_i \in \setR^{n \times n}$ ($i=0,1,2,\ldots, d$).
  For these $A_i$ ($i=0,1,2,\ldots, d$), we define
  the \textit{Lyapunov operator}
  $L_0: \setR^{n \times n} \to \setR^{n \times n}$ by
  \begin{equation}
    \label{eqn:def_L0}
    L_0(P) \coloneqq A_0 P + P A_0^\top
  \end{equation}
  and the set $\mathcal{K} \subseteq \setR^{n \times n}$ by
  \begin{equation}
    \label{eqn:def_setK}
    \mathcal{K} \coloneqq \mleft\{ P \in \setR^{n \times n} \relmiddle{|}
    A_i P + P A_i^\top = 0 \ \ (i = 1,2,\ldots,d) \mright\}.
  \end{equation}
  Then, a solution to \probref{prob:informativity}
  is given by the following theorem.
  
  \begin{theorem}
    \label{thm:informativity}
    Consider \probref{prob:informativity}.
    Let $A_i \in \setR^{n \times n}$ ($i=0,1,2,\ldots, d$)
    be matrices satisfying \eqref{eqn:structure_of_affine_setSigma}.
    Then, the following three statements are equivalent:
    \begin{enumerate}[(i)]
      \item $(\mathcal{D}, \setSigmapk)$ is jointly informative for the Lyapunov equation \eqref{eqn:Lyap}.
      \item The following equations have a solution $P \in \setR^{n \times n}$:
            \begin{equation}
              \label{eqn:informativity_Lyapeqns}
              \mleft\{
              \begin{aligned}
                A_0 P + PA_0^\top &= -Q, \\
                A_i P + PA_i^\top &= 0 \ \ (i=1,2,\ldots,d).
              \end{aligned}
              \mright.                
            \end{equation}
      \item The following condition holds:
            \begin{equation}
              \label{eqn:QinL0K}
              Q \in L_0(\mathcal{K}),
            \end{equation}
            where 
            $L_0(\mathcal{K}) \coloneqq
            \{ L_0(P) \in \setR^{n \times n} \mid P \in \mathcal{K} \}$.
            \thmend
    \end{enumerate}
  \end{theorem}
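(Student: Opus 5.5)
The equivalence (ii) $\Leftrightarrow$ (iii) is immediate from the definitions, and I would dispose of it first. A matrix $P$ solves \eqref{eqn:informativity_Lyapeqns} exactly when $P \in \mathcal{K}$ and $L_0(P) = -Q$. Since $\mathcal{K}$ is a linear subspace, $P \in \mathcal{K}$ implies $-P \in \mathcal{K}$, with $L_0(-P) = Q$. Hence $Q \in L_0(\mathcal{K})$ is equivalent to solvability of \eqref{eqn:informativity_Lyapeqns}, and I would just replace $P$ by $-P$ in each direction. The substance of the proof is therefore (i) $\Leftrightarrow$ (ii).

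For (ii) $\Rightarrow$ (i), I would use \propref{prop:transformation}. Let $P$ solve \eqref{eqn:informativity_Lyapeqns} and take any $\tilde{A} \in \affin(\setSigmaD \cap \setSigmapk) \cap \setA_n$. By \eqref{eqn:structure_of_affine_setSigma}, $\tilde{A} = A_0 + \sum_{i=1}^d c_i A_i$ for some scalars $c_i$. By linearity,
\[
\tilde{A}P + P\tilde{A}^\top = -Q + \sum_{i=1}^d c_i \cdot 0 = -Q .
\]
Since $\tilde{A} \in \setA_n$, the solution of this Lyapunov equation is unique, so $\Phi(\tilde{A},Q) = P$. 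Thus every such $\tilde{A}$ gives the same solution $P$, which means \eqref{eqn:Phi=Phi} holds on the affine hull intersected with $\setA_n$. \propref{prop:transformation} then gives joint informativity.

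For (i) $\Rightarrow$ (ii), I would avoid coordinates, because the $A_i$ in \eqref{eqn:structure_of_affine_setSigma} need not be linearly independent and $A_0$ need not lie in $\setA_n$. Set $\bar{P} \coloneqq \Phi(A,Q)$, which is well defined because $A \in \setSigmaD \cap \setSigmapk$ by \eqref{eqn:A_in_setSigmapk} and since the true system is consistent with the data. Define the affine map $F:\setR^{n\times n} \to \setR^{n\times n}$ by $F(\tilde{A}) \coloneqq \tilde{A}\bar{P} + \bar{P}\tilde{A}^\top + Q$.

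1. By (i), $\Phi(\tilde{A},Q) = \bar{P}$ for every $\tilde{A} \in \setSigmaD \cap \setSigmapk$, so $F$ vanishes on $\setSigmaD \cap \setSigmapk$.
2. An affine map commutes with affine combinations (the same computation as in the proof of \lemref{lem:equivalence}), so $F$ vanishes on all of $\affin(\setSigmaD \cap \setSigmapk)$. This holds without any membership in $\setA_n$, because the argument never uses uniqueness.
3. Since $A_0$ and $A_0 + A_i$ both lie in the affine hull, evaluating gives $F(A_0) = 0$, which is the first equation of \eqref{eqn:informativity_Lyapeqns}.
4. Subtracting, $F(A_0 + A_i) - F(A_0) = A_i\bar{P} + \bar{P}A_i^\top = 0$ for each $i$, which gives the remaining equations.

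Hence $\bar{P}$ solves \eqref{eqn:informativity_Lyapeqns}.

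The only step needing care is step 2. It must be done on the whole affine hull rather than on $\affin(\cdot) \cap \setA_n$, because $A_0$ and $A_0 + A_i$ may fail to satisfy \eqref{eqn:Lyap_eigenvalue_cond}. Working with the map $F$ at the fixed $\bar{P}$, instead of with $\Phi$, removes this issue entirely.
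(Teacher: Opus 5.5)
Your proof is correct. For (ii)$\Leftrightarrow$(iii) and (ii)$\Rightarrow$(i) it coincides with the paper's argument, but your (i)$\Rightarrow$(ii) takes a different route.

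The paper first invokes Proposition~\ref{prop:transformation} to get $\Phi(\tilde{A},Q)=\Phi(A,Q)$ on $\affin(\setSigmaD\cap\setSigmapk)\cap\setA_n$. It then uses openness of $\setA_n$ to choose $\epsilon>0$ with $A+\epsilon A_i\in\setA_n$. It rewrites the affine hull as $A$ plus the span of $A_1,\ldots,A_d$, so that these perturbed matrices lie in it. Subtracting the resulting Lyapunov equations gives $A_iP^\ast+P^\ast A_i^\top=0$, and the $A_0$-equation follows from $A_0=A+\sum_i\theta_iA_i$.

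You instead observe that $\{\tilde{A}\mid \tilde{A}\bar{P}+\bar{P}\tilde{A}^\top=-Q\}$ is an affine subspace. Once it contains $\setSigmaD\cap\setSigmapk$, it contains the whole affine hull, so you can evaluate directly at $A_0$ and $A_0+A_i$ without caring whether they lie in $\setA_n$. This is the computation from the proof of Lemma~\ref{lem:equivalence}, stopped before uniqueness is used. Your version has three advantages:
\begin{itemize}
\item it is purely algebraic and needs no openness of $\setA_n$;
\item it needs no re-basing of the hull at $A$;
\item it shows that $A\in\setA_n$ enters this direction only to make $\bar{P}=\Phi(A,Q)$ a well-defined common solution.
\end{itemize}
Openness is needed in the paper's version because it starts from information on $\affin(\setSigmaD\cap\setSigmapk)\cap\setA_n$ only. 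It must therefore find points of that set in every direction $A_i$.

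What the paper's route buys is a proof organized around Proposition~\ref{prop:transformation}, so the theorem reads as a statement about the affine subspace alone. Yours is shorter. In fact it would let you drop that proposition from this proof entirely, since the trivial inclusion $\setSigmaD\cap\setSigmapk\subseteq\affin(\setSigmaD\cap\setSigmapk)\cap\setA_n$ already suffices for (ii)$\Rightarrow$(i).
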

  
  \begin{proof}
    The equivalence of (ii) and (iii) is obvious
    from \eqref{eqn:def_L0}, \eqref{eqn:def_setK}, and the fact that $L_0(\mathcal{K})$ is a vector subspace.
    Therefore, we only provide a proof of (i) $\Leftrightarrow$ (ii).    
    From \propref{prop:transformation}, it is sufficient to show that
    \eqref{eqn:Phi=Phi} holds for all
    $\tilde{A}_1, \tilde{A}_2\in \affin(\setSigmaD \cap \setSigmapk) \cap \setA_n$
    if and only if \eqref{eqn:informativity_Lyapeqns} holds
    for some $P \in \setR^{n \times n}$.
    This statement is proved as follows.
    
    \underline{\textit{Proof of if part}}:
    Assume that $P^\ast \in \setR^{n \times n}$ is a solution to \eqref{eqn:informativity_Lyapeqns} and
    let $\tilde{A}_1, \tilde{A}_2 \in \affin(\setSigmaD \cap \setSigmapk) \cap \setA_n$.
    From \eqref{eqn:structure_of_affine_setSigma},
    there exist scalars $c_i \in \setR$ ($i=1,2,\ldots, d$) such that
    $\tilde{A}_1 = A_0 + \sum_{i=1}^d c_i A_i$.
    Therefore, we have
    \begin{align}
      \tilde{A}_1 P^\ast \!+\! P^\ast \tilde{A}_1^\top
        &=\! \mleft( \! A_0 \!+\! \sum_{i=1}^d c_i A_i \! \mright) P^\ast
          \!+\! P^\ast \mleft( \! A_0 \!+\! \sum_{j=1}^d c_j A_j \! \mright)^{\mkern-10mu\top}
          \nonumber \\
        &= (A_0 P^\ast \!+\! P^\ast A_0^\top) + \sum_{i=1}^d c_i (A_i P^\ast \!+\! P^\ast A_i^\top ) 
          \nonumber \\
        &= -Q
    \end{align}
    from \eqref{eqn:informativity_Lyapeqns},
    which implies $\Phi(\tilde{A}_1, Q) = P^\ast$.
    In the same way, we have $\Phi(\tilde{A}_2, Q) = P^\ast$.
    Thus, the if part is proved.
    
    \underline{\textit{Proof of only-if part}}:
    Assume
    \eqref{eqn:Phi=Phi} holds for all
    $\tilde{A}_1, \tilde{A}_2 \in \affin(\setSigmaD \cap \setSigmapk) \cap \setA_n$.
    By letting $P^\ast = \Phi(A, Q)$, we have
    \begin{equation}
      \label{eqn:informativity_cond2}
      \forall \tilde{A} \in \affin(\setSigmaD \cap \setSigmapk) \cap \setA_n \ \ 
      \Phi(\tilde{A}, Q) = P^\ast
    \end{equation}
    from the assumption and
    \begin{equation}
      \label{eqn:Lyap_Past}
      A P^\ast + P^\ast A^\top = -Q
    \end{equation}
    from the definition of $P^\ast$.
    Meanwhile, since $A \in \setA_n$ and $\setA_n$ is open,
    there exists $\epsilon \in \setRp$ such that
    $A+\epsilon A_i \in \setA_n$
    for all $i=1,2,\ldots,d$.
    Moreover, since $\affin(\setSigmaD \cap \setSigmapk)$ can also be expressed as
    \begin{equation}
      \affin(\setSigmaD \cap \setSigmapk)
      = A + \vspan(\{A_1, A_2, \ldots, A_d\}),
    \end{equation}
    it follows that
    $A + \epsilon A_i \in \affin(\setSigmaD \cap \setSigmapk)\cap \setA_n$ holds.
    This fact and \eqref{eqn:informativity_cond2} yield
    \begin{equation}
      \label{eqn:pf/informativity/Lyap_A+eAi}
      (A+\epsilon A_i ) P^\ast + P^\ast (A+\epsilon A_i)^\top = -Q \ \ 
      (i=1,2,\ldots,d).
    \end{equation}
    Here, \eqref{eqn:Lyap_Past} and \eqref{eqn:pf/informativity/Lyap_A+eAi}
    indicate the relation 
    \begin{equation}
      \label{eqn:pf/informativity/lyapAi}
      A_i P^\ast + P^\ast A_i^\top = 0 \ \ (i=1,2,\ldots,d).
    \end{equation}
    On the other hand, $A_0$ can be expressed as
    \begin{equation}
      \label{eqn:pf/informativity/A0}
      A_0 = A + \sum_{i=1}^d \theta_i A_i
    \end{equation}
    for some $\theta_i \in \setR$ ($i=1,2,\ldots,d$).
    By substituting \eqref{eqn:pf/informativity/A0} into $A_0 P^\ast + P^\ast A_0^\top$, we obtain
    \begin{align}
      A_0 P^\ast + P^\ast A_0^\top
      &= \mleft(\! A \!+\! \sum_{i=1}^d \theta_i A_i \! \mright) P^\ast
        \!+\! P^\ast \mleft( \!A \!+\! \sum_{i=1}^d \theta_i A_i \! \mright)^{\mkern-10mu\top}
      \nonumber \\
      &= (A P^\ast \!+\! P^\ast A^\top) + \sum_{i=1}^d \theta_i (A_i P^\ast \!+\! P^\ast A_i^\top ) 
      \nonumber \\
      &= -Q,
      \label{eqn:pf/informativity/LyapA0}
    \end{align}
    where the last equality is derived from \eqref{eqn:Lyap_Past} and \eqref{eqn:pf/informativity/lyapAi}.    
    This equation and \eqref{eqn:pf/informativity/lyapAi} 
    indicate the existence of $P \in \setR^{n \times n}$ satisfying \eqref{eqn:informativity_Lyapeqns},
    which proves the only-if part.
    Thus, the theorem is proved.
  \end{proof}
  
  In \thmref{thm:informativity},
  (ii) shows that our joint informativity is characterized by
  $d+1$ Lyapunov equations.
  On the other hand, (iii) is a closed form of the condition (ii)
  and is useful to characterize a class of Lyapunov-based system analysis
  that can be performed using only the available data and prior knowledge.

  One may consider that
  it is difficult to construct
  $\mathcal{K} \subseteq \setR^{n \times n}$
  directly from the definition in \eqref{eqn:def_setK}.
  In such a case, the following expression may be useful:
  \begin{equation}
    \label{eqn:setK_vec}
    \mathcal{K} = \mleft\{ P \in \setR^{n \times n} \relmiddle{|}
    \vec(P) \in \bigcap_{i=1}^d
    \kernel(I_n \!\otimes\! A_i \!+\! A_i \!\otimes\! I_n)\mright\},
  \end{equation}
  where $\vec(P) \in \setR^{n^2}$ denotes the vector obtained by stacking
  the columns of $P$ and
  $\otimes$ denotes the Kronecker product.

  \begin{example}
    \label{ex:informativity_theorem}
    Consider the same $A$, $Q$,
    the dataset $\mathcal{D}$, and
    the prior knowledge $\setSigmapk$
    in \exref{ex:motivating_example}.
    Then, let us verify that
    $(\mathcal{D}, \setSigmapk)$ is jointly informative for
    the Lyapunov equation \eqref{eqn:Lyap}
    by using \thmref{thm:informativity}.
    
    From \eqref{eqn:ex/setSigmaDpk}, 
    we have 
    \begin{equation}
      \label{ex/informativity_theorem/affSigma}
      \affin(\setSigmaD \cap \setSigmapk) = \mleft\{
      \begin{bmatrix}
        -1 & \alpha \\
        0 & -2
      \end{bmatrix}
      \relmiddle{|} \alpha \in \setR \mright\}.
    \end{equation}
    Then, let
    \begin{equation}
      A_0 =
      \begin{bmatrix}
        -1 & 0 \\
        0 & -2
      \end{bmatrix}, \ \ 
      A_1 = 
      \begin{bmatrix}
        0 & 1 \\
        0 & 0
      \end{bmatrix},
    \end{equation}
    for which \eqref{eqn:structure_of_affine_setSigma} is satisfied with $d=1$.
    From \eqref{eqn:def_setK}, $\mathcal{K} \subseteq \setR^{2 \times 2}$ is expressed as
    \begin{equation}
      \mathcal{K} = \mleft\{ 
      \begin{bmatrix}
        \beta & -\gamma \\
        \gamma & 0
      \end{bmatrix}
      \relmiddle{|} \beta, \gamma \in \setR \mright\}
    \end{equation}
    and thus $L_0(\mathcal{K}) \subseteq \setR^{2 \times 2}$ is obtained as
    \begin{equation}
      \label{eqn:ex/L0}
      L_0(\mathcal{K}) = \mleft\{ 
      \begin{bmatrix}
        -2\beta & 3\gamma \\
        -3\gamma & 0
      \end{bmatrix}
      \relmiddle{|} \beta, \gamma \in \setR \mright\}.
    \end{equation}
    From \eqref{eqn:ex/L0} and \eqref{eqn:ex/A_and_Q},
    we can find that \eqref{eqn:QinL0K} is satisfied.
    Thus, it follows from \thmref{thm:informativity} that
    $(\mathcal{D}, \setSigmapk)$ is jointly informative for
    the Lyapunov equation \eqref{eqn:Lyap}.
    \thmend
  \end{example}

  A solution to \probref{prob:computation}
  is obtained by the following theorem,
  which is a direct consequence of
  \defref{def:JI_Lyap}
  and \propref{prop:transformation}.
  
  \begin{theorem}
    \label{thm:computation}
    Consider \probref{prob:computation}.
    Let $\tilde{A} \in \affin(\setSigmaD \cap \setSigmapk) \cap \setA_n$.
    Then, the linear matrix equation
    \begin{equation}
      \label{eqn:Lyap_tildeA}
      \tilde{A} P + P \tilde{A}^\top = -Q
    \end{equation}
    has a unique solution $P^\ast \in \setR^{n \times n}$ and
    it is equal to $\Phi(A,Q)$.    
    \phantom{ }\thmend
  \end{theorem}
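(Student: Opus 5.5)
The proof is short. It combines the definition of $\setA_n$ with \propref{prop:transformation}.

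First, existence and uniqueness. Since $\tilde{A} \in \setA_n$ by assumption, condition \eqref{eqn:Lyap_eigenvalue_cond} holds for $\tilde{A}$. The standard result quoted after \eqref{eqn:Lyap} then says that \eqref{eqn:Lyap_tildeA} has a unique solution, which is $P^\ast = \Phi(\tilde{A}, Q)$ by definition of $\Phi$.

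Second, identification with $\Phi(A,Q)$. Under the hypothesis of \probref{prob:computation}, $(\mathcal{D}, \setSigmapk)$ is jointly informative. By \propref{prop:transformation}, \eqref{eqn:Phi=Phi} therefore holds for all $\tilde{A}_1, \tilde{A}_2 \in \affin(\setSigmaD \cap \setSigmapk) \cap \setA_n$.

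It remains to check that the true $A$ lies in this set. We have $A \in \setSigmapk$ by \eqref{eqn:A_in_setSigmapk}. We also have $A \in \setSigmaD$, because the data in \eqref{eqn:def_dataset} are generated by $A$, so $x(\tau, x_0) = e^{A\tau} x_0$ for every $(\tau, \xi) \in \mathcal{D}$; together with $A \in \setA_n$ this is exactly the membership condition in \eqref{eqn:def_setSigmaD}. Hence $A \in \setSigmaD \cap \setSigmapk \subseteq \affin(\setSigmaD \cap \setSigmapk)$, and $A \in \setA_n$ by assumption.

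Applying \eqref{eqn:Phi=Phi} with $\tilde{A}_1 = \tilde{A}$ and $\tilde{A}_2 = A$ gives $P^\ast = \Phi(\tilde{A}, Q) = \Phi(A, Q)$, which is the claim.

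There is no real obstacle here. The only point needing care is the membership $A \in \setSigmaD$, which follows directly from how $\mathcal{D}$ is generated. An alternative route avoids \propref{prop:transformation} and uses \thmref{thm:informativity} directly: a solution $P$ of \eqref{eqn:informativity_Lyapeqns} solves the Lyapunov equation of every element of the affine hull, so by uniqueness it equals both $\Phi(\tilde{A},Q)$ and $\Phi(A,Q)$.
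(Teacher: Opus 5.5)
Your proof is correct and follows the paper's route. The paper presents this theorem as a direct consequence of Definition~\ref{def:JI_Lyap} and Proposition~\ref{prop:transformation}, and your argument uses exactly that combination: uniqueness from $\tilde{A} \in \setA_n$, then comparison of $\tilde{A}$ with $A$ inside $\affin(\setSigmaD \cap \setSigmapk) \cap \setA_n$, with your explicit check that $A \in \setSigmaD \cap \setSigmapk$ filling in the one step the paper leaves implicit.
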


  \begin{example}
    \label{ex:computation_theorem}
    Consider the same $A$, $Q$,
    the dataset $\mathcal{D}$, and
    the prior knowledge $\setSigmapk$
    in \exref{ex:motivating_example}.
    Then, let us compute $\Phi(A, Q)$
    by using \thmref{thm:computation}.
    
    From \eqref{ex/informativity_theorem/affSigma},
    we have
    \begin{equation}
      \affin(\setSigmaD \cap \setSigmapk) \cap \setA_2 = \mleft\{
      \begin{bmatrix}
        -1 & \alpha \\
        0 & -2
      \end{bmatrix}
      \relmiddle{|} \alpha \in \setR \mright\}.      
    \end{equation}
    By choosing $\tilde{A} \in \affin(\setSigmaD \cap \setSigmapk) \cap \setA_2$ as
    \begin{equation}
      \tilde{A} = 
        \begin{bmatrix}
          -1 & 5 \\
          0 & -2
        \end{bmatrix}
    \end{equation}
    and solving \eqref{eqn:Lyap_tildeA},
    we obtain the unique solution 
    \begin{equation}
      P^\ast = 
      \begin{bmatrix}
        1 &  1 \\
        -1 & 0
      \end{bmatrix}.
    \end{equation}
    This result is consistent with \eqref{eqn:ex/PhiAQ}.
    Hence, we successfully computed the value of $\Phi(A,Q)$
    by \thmref{thm:computation}.
    \thmend
  \end{example}

  Once an affine representation satisfying \eqref{eqn:structure_of_affine_setSigma} is available, 
  the proposed test and computation reduce to solving linear equations. 
  Constructing this representation depends on
  the description of the prior-knowledge set
  and is not addressed here in full generality.

  It should be noted that  
  the proposed method is applicable
  to a strictly broader class of problems
  than methods based on unique system identification
  followed by model-based analysis
  and the direct data-driven method in \cite{bib:bannoLCSS2023}.
  In fact, Examples~\ref{ex:motivating_example}--\ref{ex:computation_theorem}
  illustrate a case where
  standard gray-box identification is impossible from
  $(\mathcal{D}, \setSigmapk)$
  (see 1) in \exref{ex:motivating_example}),
  whereas \thmref{thm:computation} can still be applied
  to obtain a solution, as demonstrated in \exref{ex:computation_theorem}.
  The method in \cite{bib:bannoLCSS2023} is also inapplicable
  to this problem because it requires
  $(\mathcal{D}, \setA_n)$ to be jointly informative
  for the Lyapunov equation \eqref{eqn:Lyap},
  which is not the case, as shown in 1) of \exref{ex:only_data_or_pk}.
  
\subsection{Special Case for Joint Informativity}
  \label{sec:special_case}

  Theorems~\ref{thm:informativity} and \ref{thm:computation}
  are based on
  $A_i$ ($i=0,1,2,\ldots, d$)
  satisfying \eqref{eqn:structure_of_affine_setSigma};
  however, they offer little structural interpretation 
  due to the general setting of $\setSigmapk$.  
  Therefore, this subsection provides further insight into our joint informativity
  by assuming the following structure for the prior knowledge $\setSigmapk$.
  
  \begin{assumption}
    \label{asm:special_structure_for_pk}
    There exist $A_{\rm r} \in \setSigmapk$, $l \in \setZp$, and $C \in \setR^{l \times n}$
    such that
    \begin{equation}
      \label{eqn:setSigmapk_sp}
      \setSigmapk = \{ \tilde{A} \in \setA_n \mid \exists H \in \setR^{n \times l} \ \ \tilde{A} = A_{\rm r} + HC \}
    \end{equation}
    holds.
    \thmend
  \end{assumption}
  
  This type of prior knowledge naturally arises
  when the system in \eqref{eqn:sys} includes
  unknown feedback or interactions
  that depend only on known features of the state.
  To illustrate this, consider a system in the form of
  \begin{equation}
    \mleft\{
      \begin{aligned}
        \dot{x}(t) &= A_{\rm r} x(t) + Hv(t) \\
        v(t) &= Cx(t),
      \end{aligned}
    \mright.
  \end{equation}
  where $A_{\rm r} \in \setR^{n \times n}$ and $C \in \setR^{l \times n}$ are known
  whereas the feedback or interaction gain $H \in \setR^{n \times l}$ is completely unknown. 
  The resulting state matrix is $A_{\rm r} + HC$, 
  and hence the set of all admissible state matrices
  is precisely the right-hand side of \eqref{eqn:setSigmapk_sp}. 
  Such a structure appears, for example, 
  in systems with unknown output feedback and 
  in networked systems with unknown interactions based on known relative-state variables.

  Let us reconsider \probref{prob:informativity}
  under \asmref{asm:special_structure_for_pk}.
  To characterize the set $\setSigmaD \cap \setSigmapk$,
  the following lemma is useful.
  
  \begin{lemma}
    \label{lem:setSigma_transformation}
    Consider \probref{prob:informativity}.
    Then, the following statements hold.
    \begin{enumerate}[(i)]
      \item The statement in \asmref{asm:special_structure_for_pk} holds
            if and only if
            there exist $m \in \setZp$ and $Y_0 \in \setR^{n \times m}$ such that
            \begin{equation}
              \label{eqn:setSigmapk_sp_Y0}
              \setSigmapk = \{\tilde{A} \in \setA_n \mid
              (\tilde{A} - A) Y_0 = 0 \}.
            \end{equation}
      \item Let $X_0 \in \setR^{n \times s}$ be a matrix satisfying
              \begin{equation}
                \image(X_0) = 
                \vspan\mleft( \bigcup_{t \in [0, \tf)} \{ x(t, x_0) \} \mright).
              \end{equation}
              Then, the equality
              \begin{equation}
                \label{eqn:setSigmaD_X0}
                \setSigmaD = \{ \tilde{A} \in \setA_n \mid (\tilde{A} - A) X_0 = 0 \}
              \end{equation}
              holds. \thmend
    \end{enumerate}
  \end{lemma}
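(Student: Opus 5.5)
For part (i), the plan is to identify the set $\{HC \mid H \in \setR^{n\times l}\}$ with a set described by a kernel condition. For any $C \in \setR^{l\times n}$, a matrix $\Delta \in \setR^{n\times n}$ can be written as $\Delta = HC$ for some $H$ if and only if $\kernel(C) \subseteq \kernel(\Delta)$. Equivalently, the rows of $\Delta$ lie in the row space of $C$. If we take $Y_0$ to be a matrix whose columns span $\kernel(C)$, this becomes $\Delta Y_0 = 0$.

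For the ``only if'' direction, assume \asmref{asm:special_structure_for_pk} holds. Since $A \in \setSigmapk$, we have $A = A_{\rm r} + H_A C$, so $A_{\rm r} + HC = A + (H - H_A)C$. Hence $\setSigmapk$ consists of all $\tilde{A} \in \setA_n$ with $\tilde{A} - A \in \{HC\}$, that is, with $(\tilde{A} - A)Y_0 = 0$. Some care is needed with dimensions, because $m$ must lie in $\setZp$. If $\kernel(C) = \{0\}$, we pick $m = 1$ and $Y_0 = 0_{n\times 1}$. Then every $\Delta$ factors as $HC$, since $C$ has full column rank and $H = \Delta C^+$ works, and the condition $\Delta Y_0 = 0$ is vacuous, so the two descriptions agree.

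For the ``if'' direction, we take $A_{\rm r} = A \in \setSigmapk$ and choose $C$ with $\kernel(C) = \image(Y_0)$. For instance, $C$ can be a matrix whose rows span $\image(Y_0)^\perp$. If that orthogonal complement is $\{0\}$, we use $l = 1$ and $C = 0_{1\times n}$, so that only $\Delta = 0$ is admissible, as required. Then $(\tilde{A} - A)Y_0 = 0$ holds if and only if $\tilde{A} - A = HC$ for some $H$. The key linear-algebra fact is the factorization: $\kernel(C) \subseteq \kernel(\Delta)$ implies $\Delta = \Delta C^+ C$, because $I - C^+C$ is the orthogonal projector onto $\kernel(C)$.

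For part (ii), we write $\mathbf{X} \coloneqq \vspan\{x(t,x_0) : t\in[0,\tf)\}$. Let $\tilde{A} \in \setA_n$. First suppose $(\tilde{A} - A)X_0 = 0$, so that $\tilde{A}$ agrees with $A$ on $\mathbf{X}$. The set $\mathbf{X}$ is $A$-invariant: by the Cayley--Hamilton theorem and analyticity of $t \mapsto e^{At}x_0$, we have $\mathbf{X} = \vspan\{x_0, Ax_0, \ldots, A^{n-1}x_0\}$, using $\tf > 0$. Differentiating $e^{At}x_0$ repeatedly at $t = 0$ shows that $A^k x_0 \in \mathbf{X}$; conversely, each $e^{At}x_0$ lies in the Krylov space. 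Consequently $\tilde{A}^k x_0 = A^k x_0$ for all $k$, by induction, since $\tilde{A}$ equals $A$ on the invariant subspace $\mathbf{X}$. This gives $e^{\tilde{A}t}x_0 = e^{At}x_0$ for all $t$, so $\tilde{A} \in \setSigmaD$.

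Conversely, suppose $e^{\tilde{A}t}x_0 = e^{At}x_0$ on $[0,\tf)$. Differentiating $k$ times at $t = 0$ (one-sided, which is valid on a right neighbourhood of $0$) gives $\tilde{A}^k x_0 = A^k x_0$ for all $k$. Therefore $\tilde{A}A^k x_0 = \tilde{A}^{k+1}x_0 = A^{k+1}x_0$, so $\tilde{A}$ and $A$ agree on the Krylov space, which equals $\mathbf{X} = \image(X_0)$. This is exactly $(\tilde{A} - A)X_0 = 0$.

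I expect the main obstacle to be the rigorous identification of $\mathbf{X}$ with the Krylov space. One inclusion follows from $e^{At}x_0$ being a power series in $A$ reduced by Cayley--Hamilton. The other follows because $\mathbf{X}$ is a closed subspace containing $x(t)$ for all $t\in[0,\tf)$, so it contains all derivatives at $0$ as limits of difference quotients. The remaining steps are routine linear algebra, together with the bookkeeping for the degenerate dimension cases.
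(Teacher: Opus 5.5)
Your proof is correct.

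\textbf{Part (i).} Here you follow the paper's route. The key fact is that, when $\image(Y_0)=\kernel(C)$, one has $\Delta Y_0=0$ if and only if $\Delta=HC$ for some $H$; this is the paper's Proposition~\ref{prop:equiv}. You combine it with the rewriting $A_{\rm r}+HC=A+(H-H_A)C$, which uses $A\in\setSigmapk$, and you take $A_{\rm r}=A$ in the converse, as the paper does.

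The only difference is the choice of auxiliary matrices. The paper uses $C=I_n-Y_0Y_0^+$ and $Y_0=I_n-C^+C$, so that $l=m=n$. These satisfy $\kernel(C)=\image(Y_0)$ by construction and cover the degenerate cases $\kernel(C)=\{0\}$ and $\image(Y_0)=\setR^n$ without comment. Your basis-of-the-complement choices instead need a separate case split to keep $m,l\ge 1$.

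\textbf{Part (ii).} Here the routes genuinely differ. The paper gives no argument and simply invokes Proposition~3 of \cite{bib:vanwaarde2019} and Lemma~2 of \cite{bib:bannoLCSS2023}. You prove it directly in two steps:
\begin{enumerate}
\item You identify $\image(X_0)$ with the Krylov space $\vspan\{x_0,Ax_0,\ldots,A^{n-1}x_0\}$, using Cayley--Hamilton for one inclusion and $\tf>0$ together with closedness of $\image(X_0)$ for the other.
\item Using the $A$-invariance of that space, you show that both data consistency and $(\tilde A-A)X_0=0$ are equivalent to $\tilde A^kx_0=A^kx_0$ for all $k$.
\end{enumerate}
This costs a few lines, but it makes the lemma self-contained and shows plainly that the data enter only through $\image(X_0)$.

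The one step worth writing out is $A^kx_0\in\image(X_0)$ for $k\ge 2$. The cleanest way is to take $v\perp\image(X_0)$ and note that $v^\top e^{At}x_0$ vanishes on $[0,\tf)$. Then all its right derivatives at $0$, namely $v^\top A^kx_0$, vanish.
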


  \begin{proof}
    Proofs are given in \asecref{asec:pf/setSigma_transformation}.
  \end{proof}
  
    From \eqref{eqn:setSigmapk_sp_Y0} and \eqref{eqn:setSigmaD_X0},
    $\setSigmaD \cap \setSigmapk$ is explicitly expressed as
    \begin{equation}
      \label{eqn:setSigma_sp}
      \setSigmaD \cap \setSigmapk = \{ \tilde{A} \in \setA_n \mid (\tilde{A} - A) Z = 0 \},
    \end{equation}
    where $Z \in \setR^{n \times r}$ is a matrix with full column rank satisfying
    $\image(Z) = \image([X_0 \ \ Y_0])$
    (note that $r = 0$ is allowed here).
    By considering the structure of $\setSigmaD \cap \setSigmapk$,
    the following result is obtained.

  \begin{theorem}
    \label{thm:informativity_sp}
    Consider \probref{prob:informativity}.
    Suppose that \asmref{asm:special_structure_for_pk} holds
    and $Z \in \setR^{n \times r}$ is a matrix
    with full column rank satisfying \eqref{eqn:setSigma_sp}.
    Let $A_0 \in \affin(\setSigmaD \cap \setSigmapk)$ be given.
    Then, the following three statements are equivalent:
    \begin{enumerate}[(i)]
      \item $(\mathcal{D}, \setSigmapk)$ is jointly informative for the Lyapunov equation \eqref{eqn:Lyap}.
      \item The following equations have a solution $P \in \setR^{n \times n}$:      
      \begin{equation}
        \label{eqn:informativity_sp_eqns}
        \mleft\{
        \begin{aligned}
          &A_0 P + PA_0^\top = -Q, \\
          &\Pi P = P\Pi = 0,
        \end{aligned}
        \mright.                
      \end{equation}
      where $\Pi \coloneqq I_n - ZZ^+$.
      \item \eqref{eqn:QinL0K} holds for
          \begin{equation}
            \mathcal{K} =
            \{ Z W Z^\top \mid W \in \setR^{r \times r} \}.
          \end{equation}
          \thmend
    \end{enumerate}
  \end{theorem}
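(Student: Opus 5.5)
The plan is to reduce the statement to \thmref{thm:informativity}. To do so we identify the affine hull $\affin(\setSigmaD \cap \setSigmapk)$ explicitly and then compute the associated set $\mathcal{K}$ in \eqref{eqn:def_setK}.

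\textbf{Step 1: the affine hull.} By \eqref{eqn:setSigma_sp}, $\setSigmaD \cap \setSigmapk = (A + \mathbf{W}) \cap \setA_n$, where $\mathbf{W} \coloneqq \{ \Delta \in \setR^{n \times n} \mid \Delta Z = 0 \}$. I claim $\affin(\setSigmaD \cap \setSigmapk) = A + \mathbf{W}$.
\begin{itemize}
\item The inclusion ``$\subseteq$'' is immediate, since $A + \mathbf{W}$ is affine.
\item For ``$\supseteq$'', recall $A \in \setA_n$ and $\setA_n$ is open. Pick a basis $\{\Delta_1, \ldots, \Delta_d\}$ of $\mathbf{W}$ and choose $\epsilon > 0$ with $A + \epsilon \Delta_i \in \setA_n$ for all $i$. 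Then $A$ and the points $A + \epsilon \Delta_i$ all lie in $\setSigmaD \cap \setSigmapk$, and their affine hull is $A + \mathbf{W}$.
\end{itemize}
This gives \eqref{eqn:structure_of_affine_setSigma} with the $A_i$ ($i \ge 1$) spanning $\mathbf{W}$. Since any $A_0 \in \affin(\setSigmaD \cap \setSigmapk)$ may serve as base point, \thmref{thm:informativity} applies with the given $A_0$.

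\textbf{Step 2: computing $\mathcal{K}$.} The only genuine work is showing
\[
\{ P \mid \Delta P + P\Delta^\top = 0 \ \forall \Delta \in \mathbf{W} \} = \{ ZWZ^\top \mid W \in \setR^{r\times r} \}.
\]
Note that $\mathbf{W} = \{ M\Pi \mid M \in \setR^{n\times n} \}$, since $\Delta Z = 0$ if and only if $\Delta = \Delta \Pi$, where $\Pi$ is the orthogonal projector onto $\image(Z)^\perp$.

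For ``$\supseteq$'': if $P = ZWZ^\top$, then $\Pi P = 0$ and $P\Pi = 0$. Hence $M\Pi P + P\Pi M^\top = 0$ for every $M$.

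For ``$\subseteq$'': suppose $M\Pi P + P\Pi M^\top = 0$ for all $M$. Take $M = e_j e_k^\top$ (the case $r = n$ is trivial, as then $\Pi = 0$). This yields
\[
e_j (e_k^\top \Pi P) + (P\Pi e_k) e_j^\top = 0
\]
for all $j,k$. Write $u^\top \coloneqq e_k^\top \Pi P$ and $v \coloneqq P\Pi e_k$, so $e_j u^\top + v e_j^\top = 0$ for every $j$.
\begin{itemize}
\item If $n \ge 2$, take $i \neq j$. Reading row $i$ of $e_j u^\top + v e_j^\top = 0$ gives $v_i e_j^\top = 0$, so $v_i = 0$. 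Varying $j$ kills every entry of $v$, and then $u = 0$ as well.
\item If $n = 1$, the identity reads $2\Pi P = 0$, which gives the same conclusion.
\end{itemize}
Hence $\Pi P = 0$ and $P\Pi = 0$. Consequently $P = ZZ^+ P ZZ^+ = Z(Z^+ P Z^{+\top})Z^\top$, using that $ZZ^+$ is symmetric. This derivation is the step to check most carefully, including the edge cases $r = 0$ (where $\mathcal{K} = \{0\}$ by the empty-matrix conventions) and $r = n$ (where $\mathbf{W} = \{0\}$, $d = 0$, and $\mathcal{K} = \setR^{n\times n}$).

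\textbf{Step 3: conclusion.} With $\mathcal{K}$ identified, statement (iii) is exactly (iii) of \thmref{thm:informativity}, so (i) $\Leftrightarrow$ (iii). Statement (ii) follows because Step 2 also showed $\mathcal{K} = \{P \mid \Pi P = P\Pi = 0\}$. Therefore \eqref{eqn:informativity_sp_eqns} is solvable if and only if $-Q \in L_0(\mathcal{K})$, equivalently $Q \in L_0(\mathcal{K})$, since $L_0(\mathcal{K})$ is a subspace.
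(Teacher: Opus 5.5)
Your proposal is correct and follows the paper's proof in all essentials. Like the paper, you reduce to Theorem~\ref{thm:informativity}, identify $\affin(\setSigmaD\cap\setSigmapk)=A+\{M\Pi\mid M\in\setR^{n\times n}\}$ via openness of $\setA_n$, and show that the resulting $\mathcal{K}$ equals $\{P\mid \Pi P=P\Pi=0\}=\{ZWZ^\top\mid W\in\setR^{r\times r}\}$. The only local difference is how $\Pi P=P\Pi=0$ is obtained: the paper sums the diagonal equations into $\Pi P+P\Pi=0$ and uses $\Pi^2=\Pi$ to get $\Pi P=-\Pi P\Pi=0$, which works in any dimension and avoids your entrywise row-reading argument and the separate $n=1$ case.
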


  \begin{proof}
    Let us show that (i) $\Leftrightarrow$ (ii).    
    Let $d = n^2$, 
    $A_1 = E_{11}\Pi$, $A_2 = E_{12}\Pi$, $\ldots$, and $A_{n^2} = E_{nn}\Pi$,
    where $E_{ij} \in \setR^{n \times n}$ is the matrix
    whose $(i,j)$-th entry is one and all other entries are zero.
    Then, \thmref{thm:informativity} and the following facts prove
    the equivalence of (i) and (ii).
    
    \begin{enumerate}[(a)]
      \item \eqref{eqn:structure_of_affine_setSigma} holds
            for the given $A_0$ and the above $A_i$ $(i=1,2,\ldots,d)$.
      
      \item Let $P \in \setR^{n \times n}$. Then,
            \begin{equation}
              \label{eqn:pf/informativity_pf/LyapAi}
              A_i P + PA_i ^\top = 0 \ \ (i=1,2,\ldots,d)
            \end{equation}
            holds if and only if
            \begin{equation}
              \label{eqn:pf/informativity_pf/PPi}
              \Pi P = P\Pi = 0.
            \end{equation}
    \end{enumerate}
    
    First, we show (a).
    From \eqref{eqn:setSigma_sp} and \propref{prop:equiv} in \asecref{asec:pf/setSigma_transformation},
    $\setSigmaD \cap \setSigmapk$ is expressed as
    \begin{equation}
      \label{eqn:pf/informativity_sp/tildeA_parametrization}
      \setSigmaD \cap \setSigmapk = \{ \tilde{A} \in \setA_n \mid
        \exists K \in \setR^{n \times n} \ \  \tilde{A} = A + K \Pi  \}.
    \end{equation}
    Moreover, 
    \begin{equation}
      \label{eqn:pf/informativity_sp/affin}
      \affin(\setSigmaD \cap \setSigmapk)
      = A+\mathbf{V}
    \end{equation}
    holds, where
    $\mathbf{V} \coloneqq \{K\Pi \mid K \in \setR^{n \times n} \}$
    is a vector subspace of $\setR^{n \times n}$.
    This fact is shown as follows.
    Since $\setA_n$ is open and $A \in \setA_n$ holds,
    there exists $\epsilon>0$ such that
    \begin{equation}
      \mathbf{B}_{\epsilon}(A) \cap (A+\mathbf{V})
      \subseteq \setSigmaD \cap \setSigmapk
      \subseteq A+\mathbf{V},
    \end{equation}
    where 
    $\mathbf{B}_\epsilon(A) \coloneqq \{ \tilde{A} \in \setR^{n \times n} \mid
    \| \tilde{A} - A \|_{\rm F} < \epsilon \}$ is an open ball
    and $\| \tilde{A} - A \|_{\rm F}$ denotes the Frobenius norm of $\tilde{A} - A$.
    Since the affine hulls of the leftmost and the rightmost sets are both equal to $A+\mathbf{V}$,
    \eqref{eqn:pf/informativity_sp/affin} is deduced.
    From
    \eqref{eqn:pf/informativity_sp/affin} and
    $A_0 \in \affin(\setSigmaD \cap \setSigmapk)$,
    we conclude that $\affin(\setSigmaD \cap \setSigmapk) = A_0 + \mathbf{V}$.
    This fact and 
    $\mathbf{V} = \vspan(\{E_{ij}\Pi \mid i,j = 1,2,\ldots,n \})$ imply that
    \eqref{eqn:structure_of_affine_setSigma} holds.
    This proves (a).

    Next, let us show the if part of (b).
    From \eqref{eqn:pf/informativity_pf/PPi} and $\Pi^\top = \Pi$,
    we have
    \begin{equation}
      \label{eqn:pf/informativity_sp/Lyap_EijPi}
      (E_{ij}\Pi)P + P(E_{ij}\Pi)^\top = 0 \ \ (i,j=1,2,\ldots,n).
    \end{equation}
    These equalities and the definition of $A_i$ ($i=1,2,\ldots,d$)
    directly yield \eqref{eqn:pf/informativity_pf/LyapAi}.
    
    The only-if part of (b) is shown as follows.
    From \eqref{eqn:pf/informativity_pf/LyapAi} and the definition of $A_i$ ($i=1,2,\ldots,d$),
    we have \eqref{eqn:pf/informativity_sp/Lyap_EijPi}.
    Summing the equations in 
    \eqref{eqn:pf/informativity_sp/Lyap_EijPi} corresponding to $i=j$ yields
    \begin{equation}
      \label{eqn:pf/informativity_sp/PiP+PPi}
        \Pi P + P\Pi^\top = 0.
    \end{equation}
    Therefore, we obtain
    \begin{align}
      \Pi P
        &= \Pi (\Pi P) 
        = \Pi (-P \Pi^\top)
        = - \Pi P \Pi^\top \nonumber \\
        &= -\frac{1}{2} \Pi (\Pi P + P \Pi^\top) \Pi^\top
        = 0,
    \end{align}
    where the first and the fourth equalities follow from $\Pi^2 = \Pi$
    and the second and fifth equalities follow from \eqref{eqn:pf/informativity_sp/PiP+PPi}.
    Since $\Pi=\Pi^\top$,
    \eqref{eqn:pf/informativity_sp/PiP+PPi}
    and $\Pi P=0$ also give
    $P\Pi = P\Pi^\top= -\Pi P = 0$.
    Thus, \eqref{eqn:pf/informativity_pf/PPi} holds,
    which proves the only-if part.
    
    This concludes the proof of (i) $\Leftrightarrow$ (ii).
    
    Next, let us show (ii) $\Leftrightarrow$ (iii).
    From \thmref{thm:informativity}, \eqref{eqn:def_setK}, and
    the equivalence of \eqref{eqn:pf/informativity_pf/LyapAi}
    and \eqref{eqn:pf/informativity_pf/PPi},
    it is sufficient to show that
    \eqref{eqn:pf/informativity_pf/PPi} is equivalent to 
    \begin{equation}
      \label{eqn:pf/informativity_pf/existsW}
      \exists W \in \setR^{r \times r} \ \ P = Z W Z^\top.
    \end{equation}
    This is proved as follows.
    Assume \eqref{eqn:pf/informativity_pf/PPi}.
    The relation $\Pi P = 0$ provides $\image(P) \subseteq \kernel(\Pi)$.
    Meanwhile, the definition of $\Pi$ implies $\kernel(\Pi) = \image(Z)$ \cite{bib:bernstein2018}.
    Thus, we have $\image(P) \subseteq \image(Z)$.
    Similarly, we have $\image(P^\top) \subseteq \image(Z)$ from $P\Pi = 0$.
    These two facts indicate \eqref{eqn:pf/informativity_pf/existsW}.
    On the other hand, the converse follows by reversing the above argument.
    Thus, (ii) $\Leftrightarrow$ (iii) is proved.
  \end{proof}

  \thmref{thm:informativity_sp} indicates that
  $\Phi(A,Q)$, the unique solution to the Lyapunov equation in \eqref{eqn:Lyap},
  is parameterized as $\Phi(A,Q) = ZWZ^\top$
  for some $W \in \setR^{r \times r}$
  when $(\mathcal{D}, \setSigmapk)$ is jointly informative.  
  This fact follows from \thmref{thm:informativity_sp}
  with $A_0=A$
  and the equivalence of
  \eqref{eqn:pf/informativity_pf/PPi} and
  \eqref{eqn:pf/informativity_pf/existsW}.
  This parameterization can also be found in \cite{bib:bannoLCSS2023},
  which addresses \probref{prob:informativity}
  for the special case $\setSigmapk = \setA_n$.
  
  Furthermore, a solution to \probref{prob:computation}
  is obtained by the parameterization in \eqref{eqn:pf/informativity_pf/existsW}, as follows.
  
  \begin{theorem}
    \label{thm:computation_sp}
    Consider \probref{prob:computation}.
    Suppose that \asmref{asm:special_structure_for_pk} holds.
    Let $Z \in \setR^{n \times r}$ be a matrix with full column rank
    satisfying \eqref{eqn:setSigma_sp},
    and let $\tilde{A} \in \affin(\setSigmaD \cap \setSigmapk)$.
    Then, the linear matrix equation
    \begin{equation}
      \label{eqn:Lyap_tildeA_ZWZT}
      \tilde{A} Z W Z^\top + Z W Z^\top \tilde{A}^\top = -Q
    \end{equation}
    has a unique solution $W^\ast \in \setR^{r \times r}$ and
    \begin{equation}
      \label{eqn:PhiAQ_Wast}
      \Phi(A, Q) = ZW^\ast Z^\top.
    \end{equation}
    \thmend
  \end{theorem}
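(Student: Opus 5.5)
The plan is to show that $W^\ast$ defined through $\Phi(A,Q)$ solves \eqref{eqn:Lyap_tildeA_ZWZT}, and that every solution of \eqref{eqn:Lyap_tildeA_ZWZT} produces a solution of the original Lyapunov equation for the true $A$. The point of the argument is that $\tilde{A}$ is only assumed to lie in $\affin(\setSigmaD \cap \setSigmapk)$, not in $\setA_n$. So uniqueness cannot come from $\tilde{A}$ directly; it must be inherited from $A \in \setA_n$ through the restriction $P = ZWZ^\top$.

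\textbf{Existence.} Since \probref{prob:computation} assumes joint informativity, \thmref{thm:informativity_sp} applies with $A_0 = A$, which is admissible because $A \in \setSigmaD \cap \setSigmapk$. It gives a $P$ satisfying $AP+PA^\top=-Q$ and $\Pi P = P\Pi = 0$. Because $A \in \setA_n$, this $P$ equals $\Phi(A,Q)$. By the equivalence of \eqref{eqn:pf/informativity_pf/PPi} and \eqref{eqn:pf/informativity_pf/existsW} from the proof of \thmref{thm:informativity_sp}, we may write $\Phi(A,Q) = ZW^\ast Z^\top$ for some $W^\ast \in \setR^{r\times r}$.

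Next, fact (a) in that proof gives $\affin(\setSigmaD\cap\setSigmapk) = A + \mathbf{V}$, so $\tilde{A} = A + K\Pi$ for some $K \in \setR^{n \times n}$. Expanding, and using $\Pi P = 0$, $P\Pi = 0$ and $\Pi^\top = \Pi$,
\begin{equation*}
\tilde{A}P+P\tilde{A}^\top = AP+PA^\top + K\Pi P + P\Pi K^\top = -Q .
\end{equation*}
Hence $W^\ast$ solves \eqref{eqn:Lyap_tildeA_ZWZT}.

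\textbf{Uniqueness.} Let $W$ be any solution of \eqref{eqn:Lyap_tildeA_ZWZT} and set $P = ZWZ^\top$. By the same equivalence, $\Pi P = P\Pi = 0$. The computation above, run in reverse, then shows $AP + PA^\top = -Q$. Since $A\in\setA_n$, this forces $P = \Phi(A,Q) = ZW^\ast Z^\top$. Because $Z$ has full column rank, $Z^+Z = I_r$. Multiplying $ZWZ^\top = ZW^\ast Z^\top$ on the left by $Z^+$ and on the right by $(Z^+)^\top$ yields $W = W^\ast$. This establishes both uniqueness and \eqref{eqn:PhiAQ_Wast}.

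The main subtlety is exactly this uniqueness step, since $\tilde{A}$ need not be in $\setA_n$; it is handled by transferring the equation back to $A$ via $K\Pi P = 0$. The degenerate case $r=0$ should be checked against the empty-matrix conventions. There $Z W Z^\top = 0$, so $W^\ast$ is the unique empty matrix, and informativity forces $\Phi(A,Q)=0$ and $Q=0$, so the statement still holds.
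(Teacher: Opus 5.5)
Your proof is correct and follows essentially the same route as the paper. The paper transfers the equation from $\tilde{A}$ to $A$ via $(\tilde{A}-A)Z=0$, which is your observation that $K\Pi P = P\Pi K^\top = 0$ for $P=ZWZ^\top$; it then gets existence from Theorem~\ref{thm:informativity_sp} with $A_0=A$ and uniqueness from the uniqueness of $\Phi(A,Q)$ together with the full column rank of $Z$, exactly as you do.
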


  \begin{proof}
    We first show that the solution set of
    \begin{equation}
      \label{eqn:Lyap_A_ZWZT}
      A Z W Z^\top + Z W Z^\top A^\top = -Q
    \end{equation}
    coincides with that of \eqref{eqn:Lyap_tildeA_ZWZT}.
    From $\tilde{A} \in \affin(\setSigmaD \cap \setSigmapk)$,
    \eqref{eqn:pf/informativity_sp/affin},
    and \propref{prop:equiv} in \asecref{asec:pf/setSigma_transformation},
    we obtain $(\tilde{A}- A) Z = 0$ and hence $\tilde{A}Z = AZ$.
    Therefore, for any $W \in \setR^{r \times r}$,
    \begin{align}
      \tilde{A}ZWZ^\top + ZWZ^\top\tilde{A}^\top
        &= (\tilde{A}Z)WZ^\top + ZW(\tilde{A}Z)^\top
        \nonumber\\
        &= AZWZ^\top + ZWZ^\top A^\top.
      \label{eqn:pf/computation_sp/operator_identity}
    \end{align}
    This equality directly shows that 
    \eqref{eqn:Lyap_A_ZWZT} and \eqref{eqn:Lyap_tildeA_ZWZT}
    have the same solution set.
    It therefore suffices to prove that
    \eqref{eqn:Lyap_A_ZWZT} has a unique solution $W^\ast$
    satisfying \eqref{eqn:PhiAQ_Wast}.
    
    First, we show existence.
    
    As stated in the discussion following \thmref{thm:informativity_sp},
    there exists $W^\ast \in \setR^{r \times r}$ satisfying \eqref{eqn:PhiAQ_Wast}.
    This~fact and the definition of $\Phi(A, Q)$                                                  %
    imply \eqref{eqn:Lyap_A_ZWZT} with $W = W^\ast$,
    which shows the existence of $W^\ast$ satisfying \eqref{eqn:Lyap_A_ZWZT} and
    \eqref{eqn:PhiAQ_Wast}.
    
    Next, we prove uniqueness.
    Let $W_1, W_2 \in \setR^{r \times r}$ be solutions to
    \eqref{eqn:Lyap_A_ZWZT}.
    Then, $P_1 \coloneqq ZW_1Z^\top$ and
    $P_2 \coloneqq ZW_2Z^\top$ are both solutions to
    \eqref{eqn:Lyap}.
    Since $A \in \setA_n$, this Lyapunov equation
    has a unique solution.
    Hence, $P_1=P_2$, which gives
    $Z(W_1-W_2)Z^\top=0$.
    Since $Z$ has full column rank, this implies $W_1=W_2$.
    Thus, the solution of \eqref{eqn:Lyap_A_ZWZT} is unique.
    
    By the equivalence established above,
    the same conclusion holds for
    \eqref{eqn:Lyap_tildeA_ZWZT}.
    This completes the proof.
  \end{proof}
    
  We provide two remarks for \thmref{thm:computation_sp}.
  First, unlike the general result in \thmref{thm:computation},
  \thmref{thm:computation_sp} does not require $\tilde{A} \in \setA_n$.
  This relaxation is beneficial because it allows
  $\tilde{A}$ to be chosen from the affine hull
  without imposing additional spectral constraints.
  Second, \thmref{thm:computation_sp} asserts that
  the computation of $\Phi(A,Q)$ is reduced to solving a linear equation with $r^2$ unknowns
  while \thmref{thm:computation} is based on solving 
  \eqref{eqn:Lyap_tildeA}, which has $n^2$ unknowns.
  This difference is useful for practical situations,
  e.g., when 
  \eqref{eqn:sys} is a large-scale system
  but $r$ is relatively small compared with the order $n$ of the system.

\section{CONCLUSIONS AND FUTURE WORK}
\label{sec:conclusion}

In this paper, we introduced a notion of joint informativity 
for a pair consisting of a dataset and prior knowledge, 
and provided a necessary and sufficient condition
for the pair to be jointly informative for the Lyapunov equation. 
We showed that the joint informativity is characterized
by the existence of a solution to a system of Lyapunov equations 
determined by the dataset and the prior knowledge. 
Furthermore, we explored joint informativity
in a special case where the prior knowledge has a specific structure.

The notion of joint informativity should be further investigated
for other data-driven tasks in system analysis and controller design. 
It would also be interesting to consider
experimental design problems in the presence of prior knowledge. 
In the context of Lyapunov equations, 
more general settings deserve further study. 
For example, this paper focuses on noise-free and full-state data; 
extensions to noisy or partial measurement data would be of interest.
  
\section*{ACKNOWLEDGMENT}
The author used ChatGPT by OpenAI
to assist with English-language editing,
literature review, and 
refinement of mathematical arguments throughout the manuscript.
The author takes full responsibility for the accuracy
and integrity of the manuscript.

\appendix
\subsection{Proof of \lemref{lem:setSigma_transformation}}
  \label{asec:pf/setSigma_transformation}
  
  Item (ii) is a direct consequence of
  Proposition 3 in \cite{bib:vanwaarde2019} and
  Lemma 2 in \cite{bib:bannoLCSS2023}.
  Therefore, we only provide the proof of (i).
  Our proof is based on the following fact.
  
  \begin{proposition}
    \label{prop:equiv}
    Let $M \in \setR^{n \times n}$, $Y_0 \in \setR^{n \times m}$, and $C \in \setR^{l \times n}$. 
    If
    \begin{equation}
      \label{eqn:imY0_kerC}
      \image(Y_0) = \kernel(C)
    \end{equation}
    holds, then
    $MY_0 = 0 \Leftrightarrow \exists H \in \setR^{n \times l} \ \ M = HC$.
    \thmend
  \end{proposition}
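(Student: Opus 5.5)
Both directions reduce to standard linear-algebra facts about kernels and images. The key observation is that \eqref{eqn:imY0_kerC} gives $\kernel(C) = \image(Y_0)$, and that $M = HC$ for some $H$ is equivalent to the row space of $M$ lying in the row space of $C$, i.e., $\kernel(C) \subseteq \kernel(M)$.

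\emph{($\Leftarrow$).} Suppose $M = HC$. Since $\image(Y_0) = \kernel(C)$, every column of $Y_0$ lies in $\kernel(C)$, so $CY_0 = 0$. Hence $MY_0 = HCY_0 = 0$. This direction is immediate.

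\emph{($\Rightarrow$).} Suppose $MY_0 = 0$. Then $\image(Y_0) \subseteq \kernel(M)$, and by \eqref{eqn:imY0_kerC} this gives $\kernel(C) \subseteq \kernel(M)$. Taking orthogonal complements yields $\image(M^\top) \subseteq \image(C^\top)$. Each row of $M$ is therefore a linear combination of the rows of $C$, and collecting the coefficients gives $H$ with $M = HC$.

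For a constructive choice, I would take $H \coloneqq MC^+$ and verify $M = MC^+C$. The matrix $I_n - C^+C$ is the orthogonal projector onto $\kernel(C) = \image(Y_0)$. Hence $I_n - C^+C = Y_0 G$ for some matrix $G$, and so $M(I_n - C^+C) = MY_0 G = 0$.

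The edge cases need a brief check against the empty-matrix conventions in the Notation; the formula $H = MC^+$ should cover them uniformly.
\begin{itemize}
\item If $\kernel(C) = \{0\}$, then $C^+C = I_n$ and the identity is trivial.
\item If $C = 0$, then $\kernel(C) = \setR^n$, so $Y_0$ has full row rank and $MY_0 = 0$ forces $M = 0$.
\end{itemize}

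The only step with any content is the inclusion $\kernel(C) \subseteq \kernel(M) \Rightarrow M = HC$, which is the projector argument above. I do not expect a real obstacle.
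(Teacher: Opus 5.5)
Your proof is correct and takes the same route as the paper. The paper gives no argument beyond citing Fact~3.13.12 of \cite{bib:bernstein2018}, i.e., the standard equivalence $\kernel(C) \subseteq \kernel(M) \Leftrightarrow M = HC$ for some $H$. Your reduction of $MY_0 = 0$ to $\kernel(C) \subseteq \kernel(M)$ via $\image(Y_0) = \kernel(C)$, followed by the orthogonal-complement argument or the explicit choice $H = MC^+$, is exactly that fact with its proof written out. Your edge-case checks against the empty-matrix conventions also go through.
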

  
  \propref{prop:equiv} is a variant of Fact~3.13.12 in \cite{bib:bernstein2018}.
  In the following proof, we use
  $\setSigmapk^{\prime}$ and $\setSigmapk^{\prime\prime}$ to represent
  the right-hand sides of \eqref{eqn:setSigmapk_sp} and \eqref{eqn:setSigmapk_sp_Y0}, respectively.
  
  \underline{\textit{Proof of if part}}:
  Suppose that \eqref{eqn:setSigmapk_sp_Y0} holds
  for some $m$ and $Y_0$.
  We show that $\setSigmapk = \setSigmapk^{\prime}$ holds
  for $A_{\rm r} \coloneqq A$, $l \coloneqq n$, and $C \coloneqq I_n - Y_0 Y_0^{+}$
  (note that \eqref{eqn:imY0_kerC} holds).
  Let $\tilde{A} \in \setSigmapk$.
  Then, we have
  $(\tilde{A} - A_{\rm r})Y_0 = 0$
  from \eqref{eqn:setSigmapk_sp_Y0} and thus
  there exists
  $H \in \setR^{n \times n}$ satisfying $\tilde{A} - A_{\rm r} = HC$
  from \propref{prop:equiv}.
  Hence, we have $\setSigmapk \subseteq \setSigmapk^{\prime}$.
  On the other hand,
  $\setSigmapk^{\prime} \subseteq \setSigmapk$ is also shown by reversing the above procedure.
  Thus, we obtain $\setSigmapk = \setSigmapk^{\prime}$.

  \underline{\textit{Proof of only-if part}}:
  Suppose that \eqref{eqn:setSigmapk_sp} holds
  for some $A_{\rm r}$, $l$, and $C$.
  Let us show that $\setSigmapk = \setSigmapk^{\prime\prime}$ holds
  for $m \coloneqq n$ and $Y_0 \coloneqq I_n - C^{+}C$
  (\eqref{eqn:imY0_kerC} also holds in this case).
  Let $\tilde{A} \in \setSigmapk$.
  It follows from $A, \tilde{A} \in \setSigmapk$ and \eqref{eqn:setSigmapk_sp} that
  there exist $H_1, H_2 \in \setR^{n \times l}$ such that
  \begin{align}
    A &= A_{\rm r} + H_1 C, 
      \label{eqn:pf/setSigma_transformation/A=} \\
    \tilde{A} &= A_{\rm r} + H_2 C.
    \label{eqn:pf/setSigma_transformation/tildeA=}
  \end{align}
  By substituting \eqref{eqn:pf/setSigma_transformation/A=} into
  \eqref{eqn:pf/setSigma_transformation/tildeA=},
  we can show the existence of $H \in \setR^{n \times l}$ satisfying
  $\tilde{A} - A = HC$.
  This fact and \propref{prop:equiv}
  directly provide
  $(\tilde{A} - A)Y_0 = 0$,
  which implies $\tilde{A} \in \setSigmapk^{\prime\prime}$.
  Hence, $\setSigmapk \subseteq \setSigmapk^{\prime\prime}$ is obtained.
  Meanwhile, reversing this procedure with a slight modification
  also yields
  $\setSigmapk^{\prime\prime} \subseteq \setSigmapk$.
  Thus, $\setSigmapk = \setSigmapk^{\prime\prime}$ is proved.

\end{document}